\title{Signal Recovery \\ with Non-Expansive Generative Network Priors}
\author{Jorio ~Cocola \\ Department of Mathematics, Northeastern University}
\declaretheorem[
style=plain,
name=Theorem,
numberwithin=section
]{thm}
\declaretheorem[
style=plain,
name=Proposition,
numberlike=thm
]{prop}
\declaretheorem[
style=plain,
name=Lemma,
numberlike=thm
]{lemma}
\declaretheorem[
style=definition,
name=Definition,
numberlike=thm
]{mdef}
\newtheorem*{thm*}{Theorem}
\newtheorem{assumptionA}{Assumptions}
\newtheorem{assumptionC}{Assumption}
\newtheorem{assumptionB}{Assumptions}
\newtheorem{remark}{Remark}
\newcommand{\R}{\mathbb{R}}
\newcommand{\relu}{\mathsf{ReLU}}
\newcommand{\pa}{\partial}
\DeclareMathOperator{\EX}{\mathbb{E}}
\DeclareMathOperator{\PX}{\mathbb{P}}
\newcommand{\htilde}{\tilde{h}}
\newcommand{\eps}{\epsilon}
\newcommand{\Id}{\operatorname{I}}
\newcommand{\diag}{\text{diag}}
\newcommand{\xstar}{{x_{\star}}}
\newcommand{\ystar}{{{y_\star}}}
\newcommand{\Wji}{W_{j,i}}
\newcommand{\Wjpx}{W_{j,+,x}}
\newcommand{\Wjpy}{W_{j,+,y}}
\newcommand{\Wonepx}{W_{1,+,x}}
\newcommand{\Wpx}{W_{+,x}}
\newcommand{\Wpr}{W_{+,r}}
\newcommand{\Wps}{W_{+,s}}
\newcommand{\Lambdajx}{\Lambda_{j,x}}
\newcommand{\RRWDC}{R2WDC }
\newcommand{\RRWDCs}{R2WDC}
\newcommand{\fCS}{f_{\mathrm{cs}}}
\newcommand{\fPR}{f_{\mathrm{pr}}}
\newcommand{\fDen}{f_{\mathrm{den}}}
\newcommand{\fSpike}{f_{\mathrm{spiked}}}
\newcommand{\indx}{{\mathbbm{1}}}
\begin{document}

\doparttoc 
\faketableofcontents 

\maketitle

\begin{abstract}
    We study compressive sensing with a deep generative network prior. Initial theoretical 
    guarantees for efficient recovery from compressed linear measurements have been developed 
    for signals in the range of a ReLU network with Gaussian weights and logarithmic expansivity:
    that is when each layer is larger than the previous one by a logarithmic factor. It was later shown that
    constant expansivity is sufficient for recovery. It has remained open whether the expansivity 
    can be relaxed, allowing for networks with contractive layers (as often the case of real 
    generators). In this work we answer this question, proving that a signal in the range of a 
    Gaussian generative network can be recovered from few linear measurements provided that the 
    width of the layers is proportional to the input layer size (up to log factors). This condition 
    allows the generative network to have contractive layers. Our result is based on showing that 
    Gaussian matrices satisfy a matrix concentration inequality which we term \textit{Range Restricted 
    Weight Distribution Condition} (\RRWDCs) and that weakens the \textit{Weight Distribution Condition} (WDC) 
    upon which previous theoretical guarantees were based. The WDC has also been used to 
    analyze other signal recovery problems with generative network priors. By replacing the
    WDC with the \RRWDCs, we are able to extend previous results for signal recovery with expansive generative 
    network priors to non-expansive ones. We discuss these extensions for phase retrieval, 
    denoising, and spiked matrix recovery.
\end{abstract}

\section{Introduction}

The compressed sensing problem consists in estimating a signal $\ystar \in \R^{n}$ from (possibly) noisy linear measurements
\begin{equation*}
    b = A \ystar + \eta
\end{equation*}
where $A \in \R^{m \times n}$ is the measurements matrix, $m < n$ and $\eta \in \R^{m}$ is the noise.

To overcome the ill-posedness of the problem,
structural priors on the unknown signal $\ystar$ need to be enforced.
One now classical approach assumes that the target signal $\ystar$ is sparse with respect to a given basis.
In the last 20 years, efficient reconstruction algorithms have been developed that  provably estimate $s$-sparse signals
in $\R^n$ from $m = \mathcal{O}(s \log n)$ random measurements \citep{candes2006stable,donoho2006most}.

Another approach recently put forward, leverages trained generative networks. These networks are trained, in an unsupervised manner,
to generate samples from a target distribution of signals.
Assuming $\ystar$ belongs to the same distribution used to train a generative network $G: \R^k \to \R^n$ with $k \ll n$,
an estimate of $\ystar$ can be found by searching the input $\hat{x}$ (``latent code'') of $G$ that minimizes the reconstruction error
\begin{align}
    \tilde{x} &= \arg \min_{x \in \R^{x}} \fCS(x) := \frac{1}{2} \| b - A G(x) \|_2^2,  \label{eq:bora} \\
    \ystar &\approx G(\tilde{x}). \nonumber
\end{align}
As empirically demonstrated in \citep{bora2017compressed}, the minimization problem \eqref{eq:bora} can be solved efficiently
by gradient descent methods. Moreover,  solving \eqref{eq:bora} can effectively regularize the solution of the compressed sensing problem, significantly outperforming sparsity-based algorithms 
in the low measurements regime\cite{bora2017compressed}. Generative network based inversion algorithms have  been subsequently developed for
a variety of signal recovery problems, demonstrating their  potential to outperform inversion algorithms based on non-learned
(hand-crafted) priors \cite{oscar2018phase,mosser2020stochastic,menon2020pulse,heckel2021rate,pan2021do,mardani2018deep}.
For a recent overview see \cite{ongie2020deep}.




The optimization problem \eqref{eq:bora} is in general non-convex and gradient-based methods could get stuck in local minima.
To better understand the empirical success of \eqref{eq:bora},
in \cite{hand2019globalIEEE} the authors established theoretical guarantees for the noiseless compressed sensing problem ($\eta = 0$)
where $G: \R^k \to \R^n$ is a $d$-layer $\relu$ network of the form:
\begin{equation}\label{eq:Gx}
   G(x) = \relu(W_d \cdots \relu(W_2 \relu(W_1 x)))
\end{equation}
with $W_{i} \in \R^{n_i \times n_{i-1}}$, $n_0 = k$, $n_d = n$, and $\relu(z) = \max(z,0)$ is applied entrywise.
The authors of \cite{hand2019globalIEEE} used a probabilistic model for the generative network $G$ and measurement matrix $A$.
They assumed that each layer $W_i$ has independent Gaussian entries and is \textit{strictly expansive}. Specifically it holds that
\begin{equation}\label{eq:expansivityHand}
	n_i \geq n_{i-1} \cdot \log n_{i-1} \cdot \text{poly}(d)	 \qquad \text{for all} \; i = 1, \dots, d.
\end{equation}
Moreover, 
they considered $A$ to be a Gaussian matrix and $m \geq k \cdot \log n \cdot \text{poly}(d)$.
Under this probabilistic model it was shown in \cite{hand2019globalIEEE} that, despite its non-convexity,
$\fCS$ has a favorable optimization geometry and no spurious critical points exist apart from $\xstar$
and a negative multiple of it $- \rho_d \xstar$, where $\rho_d$ is a function of the depth $d$ of the network. 

The landscape analysis was later extended to recovery guarantees using a gradient based method in \citep{huang2021provably},
under the same probabilistic assumptions of \citep{hand2019globalIEEE}. In particular, \cite{huang2021provably}
has shown that there is an efficient gradient descent method (see Algorithm \ref{algo:subGrad}
in Section \ref{sec:MainRes}) that given as input $A, G$ and $b$ outputs a latent vector $\tilde{x}$
 such that $\|\ystar - G(\tilde{x}) \|_2 = O(\|\eta\|_2)$. This result demonstrated that efficient
 recovery is possible with a number of measurements which is information-theoretic optimal
 up to $log$-factors in $n$ and polynomials in $d$  ($m = \tilde{\Omega}(k)$).

 Generative networks used in practice though, have often contractive layers. Thus,
one major drawback of the theory developed in \cite{hand2019globalIEEE} is constituted
by the expansivity condition on the weight matrices \eqref{eq:expansivityHand}. Relaxing the condition \eqref{eq:expansivityHand}
and accommodating for generative networks with contractive layers
was formulated as an open problem in the survey paper\footnote{This open problem was also
proposed in the recent talk \cite{dimakis_talk}.} \cite{ongie2020deep}.

An initial positive result on this problem, came from \cite{daskalakis20}.
Using a refined analysis of the concentration of Lipschitz functions,
the authors proved that the results of \cite{hand2019globalIEEE,huang2021provably} hold true also for weight matrices satisfying
$ n_i \geq n_{i-1} \cdot \text{poly}(d).$
While not allowing for contractive layers, this condition removed the logarithmic expansivity requirement of \eqref{eq:expansivityHand}.

More recently, \cite{joshi2021plugincs} have studied the compressive sensing problem
with random generative network prior as in \cite{hand2019globalIEEE,huang2021provably}, and have
 shown that the expansivity condition can indeed be relaxed. They have provided an
 efficient iterative method that given as input $A, b$ and $G$, assuming that up to $log$-factors each layer width satisfies
  \begin{equation}\label{eq:expansivityJoshi}
 	n_i \gtrsim 5^i k,
 \end{equation}
 and the number of measurement satisfies
 \begin{equation}\label{eq:measurJoshi}
  m \gtrsim 2^d k,
 \end{equation}
  outputs a latent vector $\tilde{x}$ such that for $\ystar = G(\xstar)$
  it holds that $\|\ystar - G(\tilde{x}) \|_2 = O( 2^d \sqrt{\frac{k}{m}}  \|\eta\|_2)$
  with high probability\footnote{This algorithm and its analysis were initially given by the same
  authors for the denoising problem in \cite{joshi2021plugin}.}.
  Notice that the condition \eqref{eq:expansivityJoshi} while requiring
  the width to grow with the depth, can allow for contractive layers $n_i < n_{i-1}$.

\subsection{Our contributions}

It is natural to wonder whether the price to pay to remove the expansivity assumption
is indeed exponential in the depth $d$ of the network, as happens in the theoretical guarantees of \cite{joshi2021plugincs}.
In this paper, we answer this question.  Our main result is summarized below
and provides guarantees for solving compressed sensing with random generative network priors via a
gradient descent method (Algorithm \ref{algo:subGrad} in Section \ref{sec:MainRes}).

\begin{thm}[Informal version of Theorem \ref{thm:mainCS_rand}]\label{thm:informal}
Assume that $A$ has i.i.d. $\mathcal{N}(0,1/m)$ entries and each $W_{i}$ has i.i.d. $\mathcal{N}(0,1/{n_i})$
entries. Suppose that $\ystar = G(\xstar)$. Furthermore assume that, up to $log$-factors,
\begin{enumerate}
    \item $n_i \geq k \cdot \text{poly}(d)$;
    \item $m \geq k \cdot \text{poly}(d)$.
\end{enumerate}
Suppose that the noise error and the step size $\alpha > 0$ are small enough.
Then with high probability, Algorithm \ref{algo:subGrad} with input loss function $\fCS$,
step size $\alpha$ and number of iterations $T = $poly$(d)$, outputs an estimate $G(x_T)$
satisfying $\|G(x_T) -  \ystar \|_2 = O(\sqrt{\frac{k}{m}} \|\eta\|_2)$
\end{thm}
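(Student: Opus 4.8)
The plan is to follow the now-standard two-step strategy for landscape-based recovery guarantees with generative priors, but with the WDC replaced everywhere by the weaker Range Restricted Weight Distribution Condition (\RRWDCs). First I would establish the concentration result separately: show that a Gaussian weight matrix $W_i \in \R^{n_i \times n_{i-1}}$ with $n_i \gtrsim k\cdot\mathrm{poly}(d)$ (up to log factors) satisfies the \RRWDCs\ with high probability, and similarly that the Gaussian measurement matrix $A$ with $m \gtrsim k\cdot\mathrm{poly}(d)$ satisfies the analogous range-restricted isometry-type condition on the relevant low-dimensional cone. The key point — and what makes contractive layers admissible — is that one never needs uniform control of $W_i$ over all of $\R^{n_{i-1}}$; it suffices to control it on the image of the previous layers, which is a union of a controlled number of $k$-dimensional pieces (one per sign pattern / ReLU activation region reachable from $\R^k$). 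A covering-number argument over this structured set, combined with a Bernstein/Hanson–Wright bound for the quadratic forms $\langle W_i u, W_i v\rangle$, gives the required number of rows proportional to $k$ rather than $n_{i-1}$.

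Next I would use the \RRWDCs\ to derive a deterministic statement about the geometry of $\fCS$: for every $x$ outside small neighborhoods of $\xstar$ and $-\rho_d\xstar$, there is a descent direction $v_x$ (the appropriate generalization of the one in \cite{hand2019globalIEEE,huang2021provably}) along which the one-sided directional derivative of $\fCS$ is bounded away from zero by a quantity like $c\,d^{-?}\|x\|_2$, with the noise contributing an additive $O(\|\eta\|_2)$ term. Concretely, expand $\nabla \fCS(x) = \Lambda_x^{\T} A^{\T}(AG(x) - AG(\xstar) - \eta)$ where $\Lambda_x$ is the product of the layerwise linearizations, replace each inner product $\langle \Lambda_x u, \Lambda_{x'} u'\rangle$ and each $\langle A\cdot,A\cdot\rangle$ by its \RRWDCs\ expectation up to a small error, and check that the resulting closed-form expression matches (up to the $\eps$-perturbations) the landscape already analyzed in the expansive case. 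This is essentially a bookkeeping argument once the concentration inputs are in place, and it shows $\fCS$ has no spurious stationary points away from the two special regions.

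Finally I would feed this geometry into the convergence analysis of Algorithm \ref{algo:subGrad}. The argument runs in two phases: (i) while $x_t$ is far from both $\xstar$ and $-\rho_d \xstar$, the subgradient step decreases a suitable potential (e.g. $\fCS(x_t)$ or $\mathrm{dist}(x_t,\{\pm\text{scalings of }\xstar\})$) by a fixed fraction each iteration, so after $T = \mathrm{poly}(d)$ steps the iterate enters a neighborhood of $\xstar$ (the negative-multiple region is avoided/escaped by the same one-sided-derivative estimate, exactly as in \cite{huang2021provably}); (ii) inside that neighborhood a local strong-convexity-type bound from the \RRWDCs\ gives linear convergence to within $O(\sqrt{k/m}\,\|\eta\|_2)$ of $\xstar$, and Lipschitz continuity of $G$ on the relevant region transfers this to $\|G(x_T)-\ystar\|_2 = O(\sqrt{k/m}\,\|\eta\|_2)$. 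The main obstacle is the first step: proving the \RRWDCs\ with only $n_i\gtrsim k\cdot\mathrm{poly}(d)$ rows requires carefully bounding the complexity of the set of ReLU activation regions that the network can realize starting from a $k$-dimensional input, so that the union bound in the covering argument costs only $\mathrm{poly}(d)\cdot k$ in the exponent rather than something that scales with the ambient widths $n_{i-1}$; getting the polynomial-in-$d$ (as opposed to exponential-in-$d$, cf.\ \eqref{eq:expansivityJoshi}–\eqref{eq:measurJoshi}) dependence is the crux and is where the range-restriction must be exploited most sharply.
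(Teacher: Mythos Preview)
Your proposal is correct and matches the paper's approach essentially line for line: prove the \RRWDC\ for Gaussian layers by restricting to the union of $O(k)$-dimensional pieces in the range of $G_{i-1}$ (the paper uses the hyperplane-arrangement count of Lemma~\ref{lemma:joshiRange} plus a tessellation/covering argument, Proposition~\ref{prop:unifSubs}), verify the RRIC for $A$, then rerun the landscape and two-phase convergence analysis of \cite{huang2021provably} with WDC replaced by \RRWDC\ throughout. The only point you leave implicit is how the improved $O(\sqrt{k/m}\,\|\eta\|_2)$ noise dependence arises: in the paper this comes from a separate bound on $\|\Lambda_x^T A^T\eta\|$ (Lemma~\ref{lemma:noise1}) that again exploits the range structure of $G$, rather than from the generic $O(\|\eta\|_2)$ perturbation you mention in the landscape step.
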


 Compared to \cite{huang2021provably} and \cite{daskalakis20}, our result do not
 require strictly expanding generative networks and allows for contractive layers.
 Furthermore, we show that the same algorithm proposed in \cite{huang2021provably} has a
 denoising effect, leading to a reconstruction of the target signal  $\ystar$ of the
 order $O(\sqrt{\frac{k}{m}} \|\eta\|_2)$ rather than only $O(\|\eta\|_2)$.

 Compared to the results of \cite{joshi2021plugincs} we show that it is sufficient for the
 width of the layers as well as the number of measurements to grow polynomially with the depth rather than exponentially.
 Similarly, compared to \cite{joshi2021plugincs}, we remove the exponential factor in the depth from the reconstruction error. 
 \smallskip


The analysis of \cite{hand2019globalIEEE} was based on a deterministic condition on the weight
matrices termed \textit{Weight Distribution Condition} (WDC). This condition, together with
a deterministic condition on $A$ (see Sec \ref{sec:detRec} for details),
was shown to be sufficient for the absence of spurious local minima in \eqref{eq:bora} and to be satisfied by expansive Gaussian random generative networks as \eqref{eq:Gx}. The WDC was also used in the subsequent
\cite{huang2021provably} to prove convergence of Algorithm \ref{algo:subGrad}.
Our main technical contribution is to show that the WDC can be replaced by a weaker
form of deterministic condition, termed \textit{Range Restricted Weight Distribution Condition} (\RRWDCs),
and still, obtain the absence of spurious local minima and
 recovery guarantees via Algorithm \ref{algo:subGrad}.
We will then show that random Gaussian networks satisfying the Assumption 1.\ of Theorem \ref{thm:informal} satisfy the \RRWDCs.

The framework introduced in \cite{hand2019globalIEEE}  was used in a number of recent works to
analyze other signal recovery problems with generative network priors, from one-bit recovery to
blind demodulation  \cite{qiu2020robust,ma2018invertibility,oscar2018phase,hand2019globalJoshi,song2019surfing,cocola2020nonasymptotic}.
These works considered expansive generative network priors, using the WDC and the results of \cite{hand2019globalIEEE} in their analysis.
Replacing the WDC with our \RRWDC we can extend the previous results in the literature to more
realistic (non-expansive) generative networks. This paper details these extensions
 for three representative signal recovery problems.

\begin{thm}\label{thm:informal2}
Suppose $G$ is random generative network as in \eqref{eq:Gx}, satisfying Assumption 1.\ of Theorem \ref{thm:informal}.
Then Algorithm \ref{algo:subGrad} with appropriate loss functions, step sizes, and number of steps,
succeed with high probability for Phase Retrieval, Denoising, and Spiked Matrix Recovery.\end{thm}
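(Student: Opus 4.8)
The plan is to follow the same two-step template that yields the compressed sensing result (Theorem \ref{thm:mainCS}). First, for each of the three problems we isolate the \emph{deterministic} part of the existing analysis — the landscape and descent lemmas for the objectives $\fPR$, $\fDen$, $\fSpike$ developed in \cite{hand2019globalJoshi,oscar2018phase,cocola2020nonasymptotic} and related works — and observe that the network $G$ enters those arguments only through a short list of consequences of the WDC: roughly, (i) the approximate-identity estimates controlling $\langle G(x), G(\xstar)\rangle$ and hence angles and norms of network outputs; (ii) the analogous control on the linearized maps $W_{d,+,x}\cdots W_{1,+,x}$ applied to vectors of the form $x-\xstar$ and to images of the relevant Jacobians; and (iii) the induced lower bound on a descent direction $v_{x,\xstar}$ uniformly away from the two critical points $\xstar$ and $-\rho_d\xstar$. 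The technical point — which is already the crux of Theorem \ref{thm:mainCS} — is that each of these consequences in fact requires only the range-restricted inner-product estimates of the \RRWDC, never the full two-sided operator behavior of a generic WDC matrix. So the first step is to repackage the \RRWDCs-based geometric lemmas proved for $\fCS$ into a problem-agnostic form and then to apply them verbatim in the three settings.

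Second, we treat the measurement operator of each problem separately, exactly as in the original papers, since that part is unrelated to the expansivity of $G$. For phase retrieval we use the loss $\fPR$ of \cite{hand2019globalJoshi,oscar2018phase} together with the corresponding phaseless concentration (a range-restricted isometry for the measurements $|a_i^{\T}(G(x)-G(\xstar))|$), and combine it with the repackaged \RRWDCs lemmas to conclude absence of spurious critical points and then convergence of Algorithm \ref{algo:subGrad}; here $m \gtrsim k\cdot\mathrm{poly}(d)$ up to log factors suffices. For denoising, the objective $\fDen(x)=\tfrac12\|G(x)-\ystar-\eta\|_2^2$ involves no measurement matrix, so the analysis is purely the \RRWDCs geometry of $G$ plus a Gaussian-noise term handled by a union bound over a net of the low-dimensional domain; this is the cleanest of the three. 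For spiked matrix recovery we use $\fSpike$ built from the rank-one observation $\ystar\ystar^{\T}$ plus Wigner/Wishart-type noise as in \cite{cocola2020nonasymptotic}, where the noise concentrates on the low-dimensional span of the Jacobian images of $G$; again this concentration is independent of expansivity and is combined with the \RRWDCs lemmas. In each case the step size, the iteration count $T=\mathrm{poly}(d)$, and the admissible noise level are recomputed from the \RRWDCs constant in the same manner as in the proof of Theorem \ref{thm:mainCS}.

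The main obstacle is the \emph{audit}: verifying that none of the three prior analyses ever invokes the WDC ``off the range'' of $G$ or of its Jacobian. Concretely, those proofs sometimes bound quantities such as $\|W_{d,+,x}\cdots W_{1,+,x}\,v\|$ or inner products $\langle W_{\cdot,+,x}\,v,\, W_{\cdot,+,y}\,w\rangle$ for intermediate vectors $v,w$ that were introduced for convenience, and one must check that such vectors always lie in — or can be rewritten to lie in — the span of differences $x-\xstar$ and of columns of the relevant Jacobians, so that the range-restricted estimate of the \RRWDCs genuinely applies; wherever the literature appealed to a generic WDC bound, we re-derive the needed inequality while tracking the restricted subspace. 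A secondary difficulty is that the phase retrieval objective is non-smooth, so the subgradient descent analysis must be re-run with the \RRWDCs-based directional bounds in place of the WDC ones; but since the subgradient structure is identical to the $\fCS$ case, this reduces to substituting the new constants. Once the audit is complete, Theorem \ref{thm:informal2} follows by combining the repackaged \RRWDCs landscape lemmas with the problem-specific measurement concentration, in exact parallel with how Theorem \ref{thm:mainCS} is obtained for $\fCS$.
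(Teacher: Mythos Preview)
Your proposal is correct and follows essentially the same approach as the paper: isolate the consequences of the WDC actually used in the prior analyses (the paper packages these as Propositions \ref{prop:concGandTheta}, \ref{prop:grad_conctr}, \ref{prop:Lip}, \ref{prop:convx}), show they all hold under the weaker \RRWDCs, and then observe that the measurement-side concentration in each of the three problems is independent of expansivity. The paper's treatment is terser than your ``audit'' description but amounts to the same verification; one small mismatch is that for phase retrieval the paper leans on \cite{leong2020} rather than \cite{hand2019globalJoshi,oscar2018phase} for the Algorithm \ref{algo:subGrad} convergence analysis.
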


Our result on the denoising problem, implies a similar result on the inversion of a generative network.
The problem of inverting a generative neural network has important applications \cite{zhu2016generative,abdal2019image2stylegan,pan2021do}, and
 has been recently analyzed theoretically \cite{lei2019inverting, joshi2021plugin,aberdam2020and}.
 Our result shows that a random generative network can be efficiently inverted by gradient descent,
 even when containing contractive layers. This motivates the empirical use of gradient-based methods for inverting generative networks.


\subsection{Organization of the paper}

This paper is organized as follows. In Section \ref{sec:prelim} we introduce some notation used in the rest of the paper. In Section \ref{sec:MainRes} we formalize the compressed sensing problem with a generative network prior and describe an algorithm for the recovery. In Section \ref{sec:detRec} we describe our novel deterministic condition on the weights of the network (\RRWDCs) and provide theoretical guarantees for solving compressed sensing with a generative network prior satisfying this condition via the algorithm described in Section \ref{sec:MainRes}. Then in Section \ref{sec:randRec} we demonstrate that random non-expansive generative networks satisfy the \RRWDC with high probability. The appendix contains the full proof of the results described in the main text. Appendix \ref{appx:extensions} contains the extension of the theoretical guarantees for compressed sensing with a generative network prior to other signal recovery problems.

\section{Preliminaries}\label{sec:prelim}

We use $I_n$ to denote the $n \times n$ identity matrix.
For $j\geq 0$, we define the $j$-th sub-network $G_j : \R^{k} \to \R^{n_j}$ as $G_j(x) = \relu(W_j \cdots \relu(W_2 \relu(W_1 x)))$, 
with the convention that $G_0(x) = I_k x = x$. For a matrix $W \in \R^{n \times k}$, let $\diag(W x > 0)$
 be the diagonal matrix with $i$-th diagonal element equal to 1 if $(W x)_i > 0$ and 0 otherwise, 
 and $\Wpx = \diag(W x > 0) W$. We then define $\Wonepx = (W_1)_{+,x} = \diag(W_1 x > 0) W_1$ and 
\[
    \Wjpx = \diag(W_j W_{j-1,+,x} \cdots W_{2,+,x}W_{1,+,x})W_j.
\]
Finally, we let  $\Lambda_{0,x} = I_k$ and for $j\geq 1$  $\Lambdajx = \prod_{\ell = 1}^j W_{\ell,+,x}$ 
with $\Lambda_x = \Lambda_{d,x} = \prod_{\ell = 1}^d W_{\ell,+,x}$. 
Notice in particular that $G_j(x) = \Lambdajx x$ and $G(x) = \Lambda_x x$.

For $r, s$ nonzero vectors in $\R^{\ell}$, we define the matrix   
\begin{equation}\label{eq:defQ}
	    Q_{r,s} = \frac{\pi - \theta_{r,s}}{2 \pi} I_{\ell} + \frac{\sin \theta_{r,s}}{2 \pi} M_{\hat{r}\leftrightarrow \hat{s}}
	\end{equation}
	where $\theta_{r,s} = \angle (r,s)$, $\hat{r} = r/\|r\|_2$, $\hat{s} = s/\|s\|_2$, $I_{\ell}$ is the $\ell\times \ell$ 
	identity matrix and $M_{\hat{r}\leftrightarrow \hat{s}}$ is the matrix that sends $\hat{r} \mapsto \hat{s}$, $\hat{s} \mapsto \hat{r}$, 
	and with kernel span$(\{r,s\})^\perp$. If $r$ or $s$ are zero, then we let $Q_{r,s} = 0$.

\section{Problem statement and recovery algorithm}\label{sec:MainRes}
Consider a generative network $G: \R^{k} \to \R^{n}$ as in \eqref{eq:Gx}. 
The compressive sensing problem with a generative network prior can be formulated as follows. 
\begin{tcolorbox}
\begin{center}
\textbf{COMPRESSED SENSING WITH A DEEP GENERATIVE PRIOR}
\end{center}
 \begin{tabular}{rl}
 \textbf{Let}: & $G: \R^k \to \R^n$ generative network, $A \in \R^{m\times n}$ measurement matrix. \\
 \textbf{Let}: & $\ystar = G(\xstar)$ for some unknown $\xstar \in \R^k$.\\
 {} & {} \\
 \textbf{Given}: & $G$ and $A$. \\
  \textbf{Given}: & {Measurements} $b = A \ystar + \eta \in \R^{m}$ {with} $m \ll n$ and  $\eta \in \R^{m}$ {noise}.\\
   {} & {} \\
  \textbf{Estimate}: & $\ystar$.
  \end{tabular}
\end{tcolorbox}
\smallskip 


To solve the compressed sensing problem with deep generative prior $G$, in \cite{huang2021provably}, 
the authors propose the gradient descent method described in Algorithm \ref{algo:subGrad} with objective function $f = \fCS$. 
This algorithm attempts to minimize the objective function  $\fCS$ in \eqref{eq:bora}. 
Because of the $\relu$ activation function, the loss function $\fCS$ is nonsmooth. 
Algorithm \ref{algo:subGrad} therefore resorts to the notion of \textit{Clarke subdifferential}. Indeed, being continuous and piecewise smooth, 
at every point $x \in \R^k$, $\fCS$   admits a Clarke subdifferential given by\footnote{For details see for example \cite{clason2017nonsmooth}.}:
\begin{equation}\label{eq:subdiff}
	\pa \fCS(x) = \text{conv} \big\{ \lim_{p \to \infty} \nabla \fCS(x_p): \; x_p \to x, \, x_p \in \text{dom}(\nabla \fCS) \big\},
\end{equation}
where with $\text{conv}(\cdot)$ we denote the convex hull and with $\text{dom}(\nabla f)$ the subset of $\R^k$ (with full measure by Rademacher's theorem) where $f$ is differentiable.  The vectors $v_x \in \pa \fCS(x)$ are called the \textit{subgradients} 
of $\fCS$ at $x$, and at a point $x$ where $\fCS$ is differentiable it holds that $\pa \fCS(x) = \{\nabla \fCS(x) \}$.

\begin{algorithm}
\DontPrintSemicolon
\SetAlgoLined
\KwIn{Objective function $f$, initial point ${x}_0 \in \R^k \setminus \{0\}$ and step size $\alpha$}
\KwOut{An estimate of the target signal $\ystar = G(\xstar)$ and latent vector $\xstar$}
\For{$t = 0,1, \dots $}{
  \lIf{ $f(-x_t) < f(x_t)$}{
  $\tilde{x}_t \gets - x_t$}
  \lElse{ 
      $\tilde{x}_t \gets x_t$
    }
    Compute $v_{\tilde{x}_t} \in \pa f(\tilde{x}_t)$\\
    ${x}_{t+1} \gets \tilde{x}_t - \alpha v_{\tilde{x}_t}$
}
\Return{$x_t, G(x_t)$}\;
\caption{{\sc Subgradient Descent} \cite{huang2021provably}}
\label{algo:subGrad}
\end{algorithm}

Notice that, as described in line 5, Algorithm \ref{algo:subGrad} corresponds to a subgradient descent method with constant step size $\alpha$. Before taking a step in the direction of the subgradient though, the algorithm checks whether the objective function at the current state $x_t$ has a larger value than the value at its negative $- x_t$, and if so it updates the current state with its negative (line 3-4). This negation step allows the algorithm to escape the spurious critical point in a neighborhood of $- \rho_d \xstar$ where $\rho_d \in (0,1)$, and it is motivated by the landscape analysis of $\fCS$ under the deterministic and probabilistic assumptions that we describe in the coming sections.


\section{Recovery guarantees under deterministic conditions}\label{sec:detRec}

The strategy taken in \cite{hand2019globalIEEE} and \cite{huang2021provably} to analyze the landscape of the minimization problem \eqref{eq:bora} and the convergence of Algorithm \ref{algo:subGrad}, consists in identifying a set of deterministic conditions on the measurements matrix $A$ and the generative network $G$, that ensure that the objective function $\fCS$ is well behaved and Algorithm \ref{algo:subGrad} converges efficiently to an estimate of $\xstar$ and $\ystar$. These conditions are then shown to hold with high probability under probabilistic models for $A$ and $G$. 
This is akin to the results on compressed sensing with sparsity where, for example, recovery guarantees were developed under the Restricted Isometry Property \cite{candes2006robust}.

The first condition, introduced in \cite{hand2019globalIEEE}, is on the measurement matrix $A$ and ensures that $A^T A$
behaves like an isometry over differences of points in the range of a generative network $G$.
\begin{mdef}[RRIC \cite{hand2019globalIEEE}]
	A matrix $A \in \R^{m \times n}$ satisfies the \textbf{Range Restricted Isometry Condition} (RRIC) with respect to $G$ with constant $\eps$ if for all $x_1, x_2, x_3, x_4 \in \R^{k}$, it holds that 
	\begin{equation*}
     \Big| \langle \big( A^T A - I_n \big)\big( G(x_1) - G(x_2)\big),  G(x_3) - G(x_4) \rangle \Big| \leq \eps \|G(x_1) - G(x_2) \| \|G(x_3) - G(x_4) \|
	\end{equation*}
\end{mdef}

The second deterministic condition introduced in \citep{hand2019globalIEEE} is on the weight matrices of $G$, ensures that they are approximately distributed like a Gaussian, and allows the control of how the layers of the network distort angles.

\begin{mdef}[WDC \cite{hand2019globalIEEE}]\label{def:WDC}
	We say that a generative network $G$ as in \eqref{eq:Gx}, 
	satisfies the \textbf{Weight Distribution Condition }(WDC) with constant $\epsilon > 0$ 
	if for all $i = 1, \dots , d$, for all $r, s \in \R^{n_{i - 1}}$:
	\begin{equation}\label{eq:WDC}
	    \| (W_{i})_{+, r}^T (W_{i})_{+,s} - Q_{r,s}\|_2 \leq \eps,
	\end{equation}	
\end{mdef}
Strictly speaking, in \cite{hand2019globalIEEE} the authors define the WDC as a property of a single weight matrix $W$, 
and then assume that the WDC is satisfied at each layer $W_i$ of $G$. This is equivalent  
to the definition above and simplifies the introduction of a novel, weaker, condition on the weight matrices, the \RRWDC below. 

\begin{mdef}[\RRWDCs]
    We say that a generative network $G$ as in \eqref{eq:Gx}, satisfies the \textbf{Range Restricted Weight Distribution Condition }(\RRWDCs) 
    with constant $\eps > 0$ if for all $i = 1, \dots , d$, and for all $x,y, x_1, x_2, x_3, x_4 \in \R^{k}$ , it holds that
    \begin{equation}\label{eq:RRWDC}
    \begin{aligned}
    \big| \langle \big((W_i)_{+, r}^T &(W_i)_{+, s} - Q_{r,s}\big)u, v\rangle \big| \leq \epsilon \|u\| \|v\|,\\
        \text{where}\;\; & r = G_{i-1}(x), \\ & s = G_{i-1}(y),   \\
        & u = G_{i-1}(x_1) - G_{i-1}(x_2),  \\
        \text{and}\;\; & v = G_{i-1}(x_3) - G_{i-1}(x_4).
    \end{aligned}
    \end{equation}
    \end{mdef}

    Notice that the \RRWDC is weaker than the WDC. Indeed,  \eqref{eq:WDC} and \eqref{eq:RRWDC} 
    are equivalent for $i=1$, but for $i \geq 2$ equation \eqref{eq:WDC} requires $(W_{i})_{+,r}^t (W_{i})_{+,s}$ 
    to be close to the matrix $Q_{r,s}$ for any vector $r, s \in \R^{n_{i-1}}$ and when acting on any vector $u, v \in \R^{n_{i-1}}$,  
    while equation \eqref{eq:RRWDC} requires $(W_{i})_{+,r}^t (W_{i})_{+,s}$ to be close to the matrix $Q_{r,s}$ only for vectors $r, s$ 
    on the range of $G_{i-1}$ and when acting on vectors $u, v \in \R^{{n_{i-1}}}$ given by the difference of points on the range of $G_{i-1}$.     
    
   Our first technical result provides theoretical guarantees for efficiently estimating a target signal $\ystar$ on the range of a generative network from few linear measurements under the RRIC and the \RRWDC.

   \begin{thm}\label{thm:mainCS_det}
     Suppose $d \geq 2$, and $A$ and $G$ satisfy the RRIC and the \RRWDC with constant $\eps < K_1 / d^{90}$. Assume that $\|\eta \|_2 \leq \frac{K_2 \|\xstar\|_2}{d^{42} 2^{d/2}}$. Let  $\{x_t\}$ be the iterates generated by Algorithm \ref{algo:subGrad} 
     with loss function $\fCS$, initial point ${x}_0 \in \R^k \setminus \{0\}$ and step size $\alpha = K_3 \frac{2^d}{d^2}$. 
     Then there exists a number of steps $T$ satisfying $T \leq \frac{K_4 f(x_0) 2^d}{d^4 \eps \|\xstar\|_2^2}$ such that
     \[
        \|x_T - \xstar \|_2 \leq K_5 d^9 \|\xstar\|_2 \sqrt{\eps} + K_6 d^6 2^{d/2} \omega \|\eta\|_2.
     \]
     In addition, for all $t \geq T$, we have
     \begin{align*}
         \|x_{t+1} - \xstar \|_2 \leq C^{t + 1 - T} \|x_T - \xstar \|_2 + K_7 2^{d/2}  \| \eta \|_2, \\
         \|G(x_{t+1}) - \ystar\|_2 \leq \frac{1.2}{2^{d/2}} C^{t+1-T} \|x_T - \xstar\|_2 + 1.2 K_7  \|\eta \|_2,
          \end{align*}
          where $C = 1 - \frac{7}{8} \frac{\alpha}{2^d} \in (0,1)$. Here, $K_1, \dots, K_7$ are universal positive constants. 
    \end{thm}
    
    \medskip
    \begin{remark}
    The exponential factors $2^d$ appearing in the conditions and theses of the theorem are artifacts of the 
    scaling of the weights of the generative network. For example, the output $G(x)$ of the network  
    scales like $\|x\|_2/2^{d/2}$ (see for example Proposition \ref{prop:concGandTheta}). 
    Choosing the weights of the network to be $\{ \sqrt{2} \, W_i \}_{i \in [d]}$ would remove the $2^d$ factors in the above theorem. 
    \end{remark}
    
    This theorem shows that, despite the nonconvexity of the minimization problem \eqref{eq:bora}, if the RRIC and the \RRWDC hold with constant $\eps$, after $T = O(\eps^{-1})$ number of iterations the iterates of the subgradient descent method described in Algorithm \ref{algo:subGrad} enter in a region of local convergence around $\xstar$. Moreover, after a large enough number of steps, $G(x_t)$ gives an estimate of the target signal $\ystar$ up to the noise level $O(\|\eta \|)$. 
    
   \smallskip
  
  
 Theorem 3.1 in \cite{huang2021provably} shows that Theorem \ref{thm:mainCS_det} holds assuming that the RRIC and the WDC hold. Our first technical contribution is to show that the WDC in Theorem 3.1 of \cite{huang2021provably}, can be relaxed into the \RRWDCs. Relaxing the WDC into the \RRWDCs will enable the relaxing of the expansivity assumption needed to show that the WDC holds for Gaussian generative networks as we demonstrate in Section \ref{sec:randRec}.

We next describe the role of these deterministic conditions in the analysis of the problem \eqref{eq:bora}. The full proof of Theorem \ref{thm:mainCS_det} is given in Appendix \ref{appx:Props}.

\subsection{Global landscape analysis via the \RRWDC}

The analysis of \cite{hand2019globalIEEE} and \cite{huang2021provably} follows the approach recent line of works that analyze the global landscape geometry of non-convex optimization problems arising in statistical and signal recovery problems (see for example \cite{sun2016complete,sun2018geometric,ge2016matrix,ge2017no} and \cite{chi2019nonconvex} for an overview). The analysis roughly consists of two steps:
\begin{enumerate}[i)]
	\item Showing that $\fCS(x) \approx f_E(x)$ and $\pa \fCS(x) \approx h_{x}$ uniformly over $x$.
	\item Analyzing the global properties of $f_E(x)$ and $h_{x}$, and transfer them to $\fCS(x)$ and $h_{x}$ using the first step.
\end{enumerate}
Here $f_E(x)$ and $h_x$ are continuous functions of $x$, corresponding to the expected value of $\fCS(x)$ and $\pa \fCS(x)$ under Gaussian weights and measurement matrix $A$ (see next section for details) and zero noise. The RRIC and the WDC are used in \cite{hand2019globalIEEE} and \cite{huang2021provably} to obtain the uniform concentration in the first step, as well as directly proving convexity-like properties of $\pa \fCS(x)$ in the vicinity of $\xstar$.

To illustrate how the WDC and the \RRWDC come into play, consider for simplicity the noiseless case $\eta = 0$.
Then at a point $x \in \R^{k}$ where $G$ is differentiable, the gradient of $\fCS$ is given by 
\[
\begin{aligned}
	\nabla \fCS(x) &= \Lambda_{d,x}^T A^T (A \Lambda_{d,x} x - A \Lambda_{d,\xstar} \xstar), \\     
	&\approx 	\Lambda_{d,x}^T ( \Lambda_{d,x} x -  \Lambda_{d,\xstar} \xstar) 
\end{aligned} 
\]
where $\Lambda_{d,x}$ and $\Lambda_{d,\xstar}$ ar  defined in Section \ref{sec:prelim} and the approximation uses the fact that $A$ satisfies the RRIC with respect to $G$.
Then if $G$ satisfies the WDC we have that 
\[
\begin{aligned}
	&\nabla \fCS(x) \approx 	\Lambda_{d,x}^T ( \Lambda_{d,x} x -  \Lambda_{d,\xstar} \xstar) 
	\\ &= \Lambda_{d-1, x}^T (W_{d})_{+, G_{d-1}(x)}^T (W_{d}\big)_{+, G_{d-1}(x)} \Lambda_{d-1,x} x -  \Lambda_{d-1, x}^T (W_{d})_{+, G_{d-1}(x)}^T (W_{d}\big)_{+, G_{d-1}(\xstar)} \Lambda_{d-1,\xstar} \xstar \\
	&= \Lambda_{d-1, x}^T \Big[ Q_{G_{d-1}(x), G_{d-1}(x)} + O(\eps)\Big] \Lambda_{d-1,x} x -  \Lambda_{d-1, x}^T \Big[ Q_{G_{d-1}(x), G_{d-1}(\xstar)} + O(\eps)\Big]  \Lambda_{d-1,\xstar} \xstar
\end{aligned}
\]
where the last line used the WDC to control the concentration of 
$(W_{d})_{+, G_{d-1}(x)}^T (W_{d}\big)_{+, G_{d-1}(x)}$ and $(W_{d})_{+, G_{d-1}(x)}^T (W_{d}\big)_{+, G_{d-1}(\xstar)}$.  The resulting terms are then controlled again applying the WDC to the the other $d-1$ weights of $G$, so that proceeding by induction over $d$ one obtains
\begin{equation}\label{eq:approxh}
	\nabla \fCS(x) \approx h_x := \frac{1}{2^d} x - \frac{1}{2^d} \htilde_{x,\xstar}, 
\end{equation}
where $\htilde$ is a deterministic vector field defined in Appendix \ref{appx:Props}.

In Appendix \ref{appx:Props} we show that the \RRWDC can be used to control directly the concentration of the terms 
\[
	\Lambda_{d-1, x}^T (W_{d})_{+, G_{d-1}(x)}^T (W_{d}\big)_{+, G_{d-1}(x)} \Lambda_{d-1,x}x
\]
and 
\[
	\Lambda_{d-1, x}^T (W_{d})_{+, G_{d-1}(x)}^T (W_{d}\big)_{+, G_{d-1}(\xstar)} \Lambda_{d-1,\xstar} \xstar,
\]
around their expectation (with respect to $W_d$) obtaining in this way 
\[
\begin{aligned}
	&\nabla \fCS(x) \approx 	\Lambda_{d,x}^T ( \Lambda_{d,x} x -  \Lambda_{d,\xstar} \xstar) 
	\\ 
	&= \Lambda_{d-1, x}^T \big[ Q_{G_{d-1}(x), G_{d-1}(x)} \big] \Lambda_{d-1,x} x -  \Lambda_{d-1, x}^T \big[ Q_{G_{d-1}(x), G_{d-1}(\xstar)}\big]  \Lambda_{d-1,\xstar} \xstar \\ &\quad+ O(\eps \|\Lambda_{d-1, x}\| \|\Lambda_{d-1, x} x\|) + O(\eps \|\Lambda_{d-1, x}\| \|\Lambda_{d-1, \xstar} \xstar\|) 
\end{aligned}
\]
Then again applying the \RRWDC to the other layers of $G$, we can show that \eqref{eq:approxh} still holds. 
 We can then borrow the analysis of $h_{x}$ from \cite{huang2021provably} and obtain the same convergence guarantees. 
 
 The advantage of using the \RRWDC over the original WDC is that it is satisfied by random generative networks with contractive layers as we demonstrate in the next section.

\section{Recovery guarantees under probabilistic assumptions}\label{sec:randRec}

In this section we give probabilistic models for the measurement matrix $A$, generative network $G$, and noise vector $\eta$ that will ensure that the RRIC and the \RRWDC are satisfied with high probability and Algorithm \ref{algo:subGrad} efficiently estimate the target signal $\ystar$ up to an error of the order $\tilde{O}(\sqrt{{k}/{m}}\|\eta\|)$.

We make the following assumption on the sensing matrix $A \in \R^{m \times n}$.
\begin{assumptionA}\label{AssumptionA}
${}$

    \begingroup
     \renewcommand{\labelenumi}{\textbf{{A.\arabic{enumi}}}}
    \begin{enumerate}\label{AssumptionA}
        
        \item $A$ is independent from $\{W_{i}\}_{i=1}^d$.\label{assmp:A3}
        
        \item $A$ has i.i.d. $\mathcal{N}(0,1/m)$ entries.\label{assmp:A4}
        
        \item There are sufficient number of linear measurements:\label{assmp:A5}
        \begin{equation}\label{eq:measurements}
            m \geq \widehat{C}_{\eps} \cdot k \cdot \log \prod_{j=1}^{d} \frac{e \,n_i}{k},
        \end{equation}
        where $\widehat{C}_{\eps}$ depends polynomially on $\eps^{-1}$.
        \end{enumerate}
    \endgroup
    
    \end{assumptionA}

Under Assumptions \ref{AssumptionA}, the measurement matrix satisfies the RRIC with respect to $G$ with high probability.  

\begin{lemma}[Consequence of Proposition 6 in \cite{hand2019globalIEEE}]\label{lemma:RRIC_rand}
	Let Assumptions \ref{AssumptionA} be satisfied. Then $A$ satisfies the RRIC with constat $\eps > 0$ with respect to $G$, with probability at least
	\[
		1 - \hat{\gamma} e^{- \hat{c} \eps m}
	\]
	where $\hat{\gamma}$ and $\hat{c}$ are positive universal constants. 
\end{lemma}    
\begin{proof}
	This result is proved in Proposition 6 in \cite{hand2019globalIEEE} for a number of measurements $m$ satisfying $m \geq C_\eps' \cdot k \cdot d \cdot \log \prod_{j=1}^d n_j $ where $C_\eps'$ depends polynomially on $\eps$. To imporove the lower bound on $m$ to \eqref{eq:measurements} it is enough to follow the proof of Proposition 6 in \cite{hand2019globalIEEE} and use the sharper upper bound on the number of affine subspaces in the range of a gnerative network given in Lemma \ref{lemma:joshiRange}.
\end{proof}

We then provide a probabilistic model for a generative network $G: \R^k \to \R^{n}$ as in \eqref{eq:Gx}.

\begin{assumptionB}\label{AssumptionB}
${}$

    \begingroup
     \renewcommand{\labelenumi}{\textbf{{B.\arabic{enumi}}}}
    \begin{enumerate}\label{AssumptionB}
      
    \item Each weight matrix $W_i \in \R^{n_i \times n_{i-1}}$ have i.i.d. $\mathcal{N}(0,1/n_i)$ entries. \label{assmp:A1}
       
        \item The first layer satisfies $n_1 \geq \widetilde{C}_\eps \cdot k $, and for any $i = 2, \dots, d$:\label{assmp:A2}
        \begin{equation}\label{eq:expansivity}
            n_i \geq \widetilde{C}_\eps \cdot k \cdot \log \prod_{j=1}^{i-1} 
             \frac{e \,n_j}{k},
        \end{equation}
        where $\tilde{C}_{\eps}$ depends polynomially on $\eps^{-1}$.
        
        \item The $\{W_{j}\}_{j=1}^{d}$ are independent. \label{assmp:A2p}
    \end{enumerate}
    \endgroup
    
    \end{assumptionB}
  
  
  Under Assumptions \ref{AssumptionB}, the generative network $G$ satisfies the \RRWDC.
  \begin{lemma}\label{prop:RandRRWDC}
	Fix $0 < \eps < 1$. Consider a $d$-layer $\relu$ network $G$ with weight matrices $\{W_i\}_{i=1}^d$. 
	Assume that the $\{W_i\}_{i=1}^d$ satisfy Assumptions \ref{AssumptionB}. 
	Then $G$ satisfies the \RRWDC with constant $\eps$  with probability at least
\[
  1 - \gamma \Big(\frac{e n_1}{k}\Big)^{2k}  e^{- {c}_\eps n_1 } - {\gamma} \sum_{i = 2}^d  \Big(\frac{e \, n_i}{k+1}\Big)^{4 k} e^{- {c}_\eps n_i/2} \] 
 where $c_\eps$ depends polynomially on $\eps^{-1}$ and ${\gamma}$ is a positive absolute constant.
\end{lemma}

\medskip

  We finally conclude with some assumptions on the noise vector $\eta \in \R^m$.
    
\begin{assumptionC}\label{AssumptionC}
${}$
    The noise vector $\eta$ is independent from $A$ and the weights $\{W_{i}\}_{i=1}^d$    
\end{assumptionC}
   
The next lemma is used to bound the perturbation of the objective function $\fCS$ 
and its gradient due to the presence of the noise term $\eta$. 
These bounds are then used to show that Algorithm 1 leads to a reconstruction of $\ystar$ of the order $O(\sqrt{k/m} \|\eta\|)$.

\begin{lemma}\label{lemma:noise1}
	Suppose $G: \R^k \to \R^n$ satisfies the \RRWDC with $\eps < 1/(16 \pi d^2)^2$ and $d \geq 2$. Let $A \in \R^{m \times n}$ 
	be a matrix with i.i.d.\ entries $\mathcal{N}(0,1/m)$ and $\eta \in \R^{m}$ satisfies Assumption \ref{AssumptionC}. Let
	\begin{align}\label{eq:omega}
        \omega &:= \frac{2}{2^{d/2}}  \sqrt{\frac{13}{12}} \sqrt{\frac{k}{m} \log\Big(5 \prod_{j=1}^{d} \frac{e \,n_i}{k} \Big) }.
    \end{align} 
     Then with probability at least
	\[
		1 - e^{- \frac{k}{2} \log(5 \prod_{i=1}^{d} \frac{e \,n_i}{k})}
	\]
	for every $x \in \R^k$ we have that
	\begin{equation}\label{eq:noise1}
		\langle x, \Lambda_x^T A^T \eta \rangle \leq \omega  \| \eta \| \|x\|,
	\end{equation}
	 if in addition $G$ is differentiable at $x$ we also have that
	 \begin{equation}\label{eq:noise2}
	 	\| \Lambda_x^T A^T \eta \| \leq \omega  \| \eta \|.
	 \end{equation}
\end{lemma}

     Given the previous assumptions, we are now ready to state the main result of this section.
      
    \begin{thm}\label{thm:mainCS_rand}
     Suppose $d \geq 2$, $\eps < K_1 / d^{90}$ and $\omega \|\eta \|_2 \leq \frac{K_2 \|\xstar\|_2}{d^{42} 2^{d/2}}$ where $\omega$ is defined in \eqref{eq:omega}. 
     
     Assume that $A$, $G$ and $\eta$ satisfy Assumptions \ref{AssumptionA}, \ref{AssumptionB} and \ref{AssumptionC}. Then  with probability at least
     \begin{equation}\label{eq:probab_thm}
        1 -  \gamma \Big(\frac{e \, n_1}{k}\Big)^{2 k} e^{- {c_\eps} n_1 } - {\gamma} \sum_{i = 2}^d  \Big(\frac{e \, n_i}{k+1}\Big)^{4 k} e^{- {c}_\eps n_i/2}  - \hat{\gamma} e^{- \hat{c} \eps m} - e^{- \frac{k}{2} \log(5 \prod_{i=1}^{d} \frac{e \,n_i}{k})},
     \end{equation}
     where $\gamma$, $\hat{\gamma}$ and $\hat{c}$ are positive universal constants, 
     the following holds. Let  $\{x_t\}$ be the iterates generated by Algorithm \ref{algo:subGrad} 
     with loss function $\fCS$, initial point ${x}_0 \in \R^k \setminus \{0\}$ and step size $\alpha = K_3 \frac{2^d}{d^2}$. 
     There exists a number of steps $T$ satisfying $T \leq \frac{K_4 f(x_0) 2^d}{d^4 \eps \|\xstar\|_2}$ such that
     \[
        \|x_T - \xstar \|_2 \leq K_5 d^9 \|\xstar\|_2 \sqrt{\eps} + K_6 d^6 2^{d/2} \omega \|\eta\|_2.
     \]
     In addition, for all $t \geq T$, we have
     \begin{align*}
         \|x_{t+1} - \xstar \|_2 \leq C^{t + 1 - T} \|x_T - \xstar \|_2 + K_7 2^{d/2} \omega \| \eta \|_2, \\
         \|G(x_{t+1}) - \ystar\|_2 \leq \frac{1.2}{2^{d/2}} C^{t+1-T} \|x_T - \xstar\|_2 + 1.2 K_7 \omega \|\eta \|_2,
          \end{align*}
          where $C = 1 - \frac{7}{8} \frac{\alpha}{2^d} \in (0,1)$. Here, $K_1, \dots, K_7$ are universal positive constants. 
    \end{thm}
    \begin{proof}
    	Combining Lemma \ref{lemma:RRIC_rand}, Lemma \ref{prop:RandRRWDC} and Theorem \ref{thm:mainCS_det} we obtain Theorem \ref{thm:mainCS_rand} with $\omega = 1$ and probability at least
    \[
        1 -  \gamma \Big(\frac{e \, n_1}{k}\Big)^{2 k} e^{- {c_\eps} n_1 } - {\gamma} \sum_{i = 2}^d  \Big(\frac{e \, n_i}{k+1}\Big)^{4 k} e^{- {c}_\eps n_i/2}  - \hat{\gamma} e^{- \hat{c} \eps m}.
     \]
   	Inspecting the proof of Theorem 3.1 in \cite{huang2021provably}, it is easy to see that if Lemma \ref{lemma:noise1} holds, then the conclusions of Theorem \ref{thm:mainCS_rand} hold with $\omega$ given by \eqref{eq:omega} and probability  at least \eqref{eq:probab_thm}.
    \end{proof}
    
    
    \medskip
    \begin{remark}
    As for Theorem \ref{thm:mainCS_det}, the exponential factors $2^d$ are artifacts of the 
    scaling of the weights of the network. Had the entries of $W_i$ been drawn from $\mathcal{N}(0, 2/n_i)$  the $2^d$ factors would not be present.    \end{remark}
    
    \begin{remark}
        Notice that $4 k \log( {e n}/{(k+1)} \big) \leq 4 k {\log(n)}/{\log(2)}$ for every $n \geq 2$. 
        Thus if for every $i = 1, \dots, d$, it holds that 
        \begin{equation}\label{eq:addAssmp}
        \frac{n_i}{\log(n_i)} \geq \frac{16 \cdot k \cdot c_\eps^{-1} }{\log(2)}
        \end{equation}
         the conclusions of the theorem hold with nontrivial probability bounds. 
         In Appendix \ref{appx:example} we provide an example of a generative network $G$ with \textit{contractive} 
         layers satisfying both \eqref{eq:expansivity} and \eqref{eq:addAssmp}.
     \end{remark}
    
    \smallskip
    
    Theorem \ref{thm:mainCS_rand} provides guarantees for the efficient recovery of a signal $\ystar$ in the range of a generative network $G$ from few noisy linear measurements, using a nonconvex (sub)gradient descent method. Notice that the intrinsic dimension of the signal $\ystar$ is $k$ (the dimension of the latent space) and the number of measurements required $m$ is proportional to $k$ and information-theoretically optimal up to $\log$ factors in the widths of the network and polynomials in the depth. Notice moreover, that up to these factors, the width $n_i$ of each layer of the network is also required to be linear in $k$. This is necessary to ensure that each subnetwork $G_i: \R^k \to \R^{n_i}$ is invertible, and it is weaker than the assumptions in the previous works that required $n_i$ to be linear in $n_{i-1}$ in order to ensure the invertibility of every single layer. We leave for future works the establishing of sharper lower bounds on the network widths and number of measurements. 
    
    Limitations of the current and previous works on theoretical guarantees for signal recovery with generative networks are the Gaussian assumption on the weights and the absence of biases. Important directions of future research are the inclusion of biases in the generative network and the departure from the Gaussian weights assumptions for more realistic probabilistic models.

    	

\subsection*{Acknowledgement}
{I would like to thank Paul Hand for comments on an
earlier version of this manuscript, and Babhru Joshi for helpful discussions.}

\bibliographystyle{plain}
\bibliography{main}

\newpage 
\appendix
\addcontentsline{toc}{section}{Appendix} 
\parttoc 

\section{Roadmap}

In Appendix \ref{appx:Props} we establish the main consequences of the \RRWDC that are then used 
to prove Theorem \ref{thm:mainCS_det}. 
Then in Appendix \ref{sec:ProofR2WDC}, we prove Lemma \ref{prop:RandRRWDC} showing that a Gaussian generative network satisfies the \RRWDC with high probability. In Appendix \ref{appx:noise} 
we analyze the perturbation of the gradient 
and objective function due to the noise term $\eta$, and provide the proof of Lemma \ref{lemma:noise1}. 
Extension of the recovery guarantees for  Phase Retrieval, Denosing, 
and Spiked Matrix Recovery are discussed in Appendices \ref{appx:PR}, \ref{appx:Den}, \ref{appx:Spike} respectively. 
Finally, in Appendix \ref{appx:example} we give an example of a network with contractive layers, satisfying the assumptions of our main theorems.


\section{Notation}


For any vector $x$ we denote with $\|x\|$ its Euclidean norm  and for any matrix  $A$ we denote with $\|A\|$ its spectral norm and with $\|A\|_F$ its Frobenius norm. The euclidean inner product between two vectors $a$ and $b$ is $\langle a,b \rangle$.
For a set $S$ we will write $|S|$ for its cardinality and $S^c$ for its complement.
Let $\mathcal{B}(x,r)$ be the Euclidean ball of radius $r$ centered at $x$, and $\mathcal{S}^{k-1}$ be the unit sphere in $\R^k$.
We will use $a = b + O_1(\delta)$ when 
$\|a - b\| \leq \delta$, where the norm is understood to be the absolute value for scalars, the Euclidean norm for vectors and the spectral norm for matrices.
\section{Consequences of the \RRWDC}\label{appx:Props}

Following \cite{hand2019globalIEEE}, we define
the function $g:[0,\pi] \to \R$ which describes how the operator $x \mapsto W_{+,x}$  distorts angles:
\begin{equation}\label{eq:def_g}
	g(\theta) := \cos^{-1}\big( \frac{(\pi - \theta) \cos \theta + \sin \theta}{\pi} \big).
\end{equation}
For two nonzero vectors $x, y$ we let $\bar{\theta}_0 = \angle(x,y)$ and define inductively $\bar{\theta}_i := g(\bar{\theta}_{i-1})$. Then we set 
\begin{equation}\label{eq:htilde}
	\htilde_{x,y} := \frac{1}{2^d} \Bigg[ \big(\prod_{i=0}^{d-1} \frac{\pi - \bar{\theta}_i}{\pi} \big) y + \sum_{i=1}^{d-1} \frac{\sin \bar{\theta}_i }{\pi}   \big(\prod_{j=i+1}^{d-1} \frac{\pi - \bar{\theta}_j}{\pi}\big) {\|y\|} \hat{x} \Bigg].
    \end{equation}

\begin{prop}\label{prop:concGandTheta}
Fix $\eps > 0$ such that $\max(2 d \eps, 10 \eps) < 1$ . Let $G$ be a generative network as in \eqref{eq:Gx} satisfying the \RRWDC  
with constant $\eps$. Then for any $x \in \R^{k}$ and $j \in [d]$
\begin{subequations}
\begin{align} 
  \|x\|^2\Big( \frac{1}{2} - \eps \Big)^j \leq  \|G_j(x)\|^2, \label{eq:boundGjx}
 &\leq   \Big( \frac{1}{2} + \eps \Big)^j \|x\|^2 \\
 \|G(x)\|^2 
 &\leq \frac{1 + 4 \eps d}{2^d} \|x\|^2. \label{eq:boundGx}
\end{align}
\end{subequations}
Moreover, for any $x \neq 0$, $y \neq 0$, $j \in [d]$, the angle $\theta_j = \angle(G_j(x), G_j(y))$ is well-defined and 
\begin{subequations}
\begin{align}
 |\theta_j - g(\theta_{j-1})| &\leq 4 \sqrt{\eps} \label{eq:thetaj}\\
 \langle G(x), G(y) \rangle &\geq \frac{1}{4 \pi} \frac{1}{2^d} \|x\| \|y\| \label{eq:LowProdGxGy} \\
 |\langle G(x), G(y) \rangle - \langle x, \htilde_{x,y} \rangle | &\leq 24 \frac{d^3 \sqrt{\eps}}{2^d}   \|x\| \|y\| \label{eq:concntrGxGy}
\end{align}
\end{subequations}
where $g$ is given in \eqref{eq:def_g} and $\htilde$ in \eqref{eq:htilde}.
\end{prop}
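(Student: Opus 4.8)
\textbf{Proof plan for Proposition \ref{prop:concGandTheta}.}

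The plan is to establish each estimate by induction on the layer index $j$, using the \RRWDC as the single source of concentration. I would begin by recording the crucial base identity: for any $x$ and any layer $i$, taking $x_1=x_3=x$, $x_2=x_4=0$ in \eqref{eq:RRWDC} and using $G_{i-1}(0)=0$ gives
\begin{equation*}
\bigl| \|G_i(x)\|^2 - \langle Q_{G_{i-1}(x),G_{i-1}(x)}\, G_{i-1}(x), G_{i-1}(x)\rangle \bigr| \leq \eps \|G_{i-1}(x)\|_2^2,
\end{equation*}
and since $Q_{r,r} = \tfrac12 I$, this reads $\|G_i(x)\|^2 = (\tfrac12 + O_1(\eps))\|G_{i-1}(x)\|^2$. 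Iterating this from $j$ down to $0$ yields \eqref{eq:boundGjx} directly, and \eqref{eq:boundGx} follows by expanding $(\tfrac12+\eps)^d \leq \tfrac{1}{2^d}(1+2\eps)^d \leq \tfrac{1+4\eps d}{2^d}$ under the hypothesis $2d\eps<1$. More generally the same substitution with a second vector $y$ controls $\langle G_i(x), G_i(y)\rangle$ in terms of $\langle Q_{G_{i-1}(x),G_{i-1}(y)}G_{i-1}(x),G_{i-1}(y)\rangle$, and evaluating $Q_{r,s}$ explicitly via \eqref{eq:defQ} expresses this as $\tfrac{\pi-\theta_{i-1}}{2\pi}\langle G_{i-1}(x),G_{i-1}(y)\rangle + \tfrac{\sin\theta_{i-1}}{2\pi}\|G_{i-1}(x)\|\|G_{i-1}(y)\|$; combined with the norm bounds this is how one sees $\cos\theta_i \approx \tfrac{(\pi-\theta_{i-1})\cos\theta_{i-1}+\sin\theta_{i-1}}{\pi} = \cos g(\theta_{i-1})$, and a Lipschitz-type bound on $\cos^{-1}$ near the relevant range (noting the product lower bound \eqref{eq:LowProdGxGy} keeps the angle away from $\pi$) upgrades this to \eqref{eq:thetaj} with the $\sqrt\eps$ loss coming from the square-root modulus of continuity of $\arccos$ near $0$.

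For the angle recursion I would track the accumulated error $|\theta_j - \bar\theta_j|$ where $\bar\theta_j = g^{(j)}(\bar\theta_0)$ is the noiseless iterate: \eqref{eq:thetaj} gives a one-step error $4\sqrt\eps$, and since $g$ is Lipschitz with constant at most $1$ on $[0,\pi]$ (a standard fact about $g$, provable by differentiating \eqref{eq:def_g}), the errors add at most linearly, giving $|\theta_j - \bar\theta_j| \lesssim d\sqrt\eps$. This lets me replace all $\theta_i$ by $\bar\theta_i$ throughout. The lower bound \eqref{eq:LowProdGxGy} then follows because the exact recursion for $\langle G(x),G(y)\rangle$ above shows it is, up to the accumulated $\eps$-errors, the quantity $\langle x,\htilde_{x,y}\rangle$ which one can check is bounded below by $\tfrac{1}{4\pi 2^d}\|x\|\|y\|$ using that $\tfrac{\pi-\bar\theta_i}{\pi}$ and $\sin\bar\theta_i$ are controlled (and that the $g$-iteration drives angles toward $0$, so $\tfrac{\pi-\bar\theta_i}{\pi}$ is bounded away from $0$ after one step).

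Finally \eqref{eq:concntrGxGy} is obtained by unrolling the recursion for $\langle G(x),G(y)\rangle$ all the way down: each of the $d$ layers contributes the multiplicative split into a "$\tfrac{\pi-\theta}{\pi}$-times-previous-inner-product" term plus a "$\tfrac{\sin\theta}{\pi}$-times-norms" term, which is precisely the combinatorial structure of $\htilde_{x,y}$ in \eqref{eq:htilde}; I would bound the difference between using the true $\theta_i$ and the ideal $\bar\theta_i$, and between the true norms/inner products and their ideal $\tfrac{1}{2^i}$-scaled values, at each of the $d$ stages. Each stage introduces an error of size $O(\tfrac{d\sqrt\eps}{2^d})\|x\|\|y\|$ (the $d$ from the angle-error accumulation, the $2^{-d}$ from the overall scale), and there are $O(d^2)$ cross-terms from expanding the telescoping product, producing the stated $O(d^3\sqrt\eps / 2^d)$ bound. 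I expect the main obstacle to be the bookkeeping in this last step: one must carefully propagate both the additive angle errors and the multiplicative norm errors through a telescoping sum whose terms already involve products of $d$ factors, and verify that the worst aggregate is genuinely $d^3\sqrt\eps$ rather than something larger; a clean way to organize this is to prove by downward induction on $i$ a statement of the form "$\langle G_i(x), G_i(y)\rangle$ equals the $\htilde$-type partial expression in $\bar\theta$'s up to error $c\,(d-i)\,\tfrac{d^2\sqrt\eps}{2^i}\|x\|\|y\|$," with the per-step increment absorbing one factor of the Lipschitz bound on the map $g$ and one factor from $|\theta_i - \bar\theta_i| \lesssim d\sqrt\eps$.
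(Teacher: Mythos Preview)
Your plan is essentially the paper's proof: apply the \RRWDC per layer to get the one-step relations $\|G_i(x)\|^2=(\tfrac12+O_1(\eps))\|G_{i-1}(x)\|^2$ and $\langle G_i(x),G_i(y)\rangle=\langle Q_{G_{i-1}(x),G_{i-1}(y)}G_{i-1}(x),G_{i-1}(y)\rangle+O_1(\eps)\|G_{i-1}(x)\|\|G_{i-1}(y)\|$, then iterate; pass from $|\cos\theta_j-\cos g(\theta_{j-1})|\le O(\eps)$ to \eqref{eq:thetaj} via the $\sqrt{\cdot}$ modulus of $\arccos$; use $|g'|\le 1$ to accumulate $|\theta_j-\bar\theta_j|\le 4j\sqrt\eps$; and unroll the inner-product recursion to reach $\htilde_{x,y}$ for \eqref{eq:concntrGxGy}. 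Two small points where your write-up diverges from the paper and where you should adjust:

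\smallskip
\textbf{(i) Circularity in the $\arccos$ step.} You invoke \eqref{eq:LowProdGxGy} to ``keep the angle away from $\pi$'' while proving \eqref{eq:thetaj}, and later you derive \eqref{eq:LowProdGxGy} from the angle recursion. This is circular and also unnecessary: the inequality $|\arccos a-\arccos b|\le C\sqrt{|a-b|}$ holds for all $a,b\in[-1,1]$, so once you have $|\cos\theta_j-\cos g(\theta_{j-1})|\le 5\eps$ (exactly as the paper computes, by writing $\cos\theta_j$ as a ratio with three error terms $\delta_1,\delta_2,\delta_3$ each bounded by $\eps$ via the \RRWDCs), \eqref{eq:thetaj} follows with no positivity input.

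\smallskip
\textbf{(ii) Route to \eqref{eq:LowProdGxGy}.} You propose to obtain \eqref{eq:LowProdGxGy} by first approximating $\langle G(x),G(y)\rangle$ with $\langle x,\htilde_{x,y}\rangle$ and then lower-bounding the latter. That detour carries the $O(d^3\sqrt\eps/2^d)$ error of \eqref{eq:concntrGxGy}, which can swamp the target $\tfrac{1}{4\pi 2^d}$ under the weak hypothesis $2d\eps<1$. The paper's argument is shorter and sharper: write $\langle G(x),G(y)\rangle=\cos\theta_d\,\|G(x)\|\|G(y)\|$, use $|\theta_d-\bar\theta_d|\le 4d\sqrt\eps$ together with the elementary fact $\bar\theta_d\le \cos^{-1}(1/\pi)$ for $d\ge 2$ to get $\cos\theta_d\ge 3/(4\pi)$, and combine with the norm lower bound from \eqref{eq:boundGjx}. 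This avoids both the extra $d^2$ loss and the need to separately lower-bound $\langle x,\htilde_{x,y}\rangle$.
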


The next result is used to prove concentration of the gradient of 
the objective function around its expectation.
\begin{prop}\label{prop:grad_conctr}
    Fix $0 < \eps < d^{-4}/(16 \pi)^2$ and $d \geq 2$. Suppose that $G$ as in \eqref{eq:Gx} satisfies the \RRWDC with constant $\eps$.  
    Let $x \in \R^k$ be a point where $G(x)$ is differentiable, and
$y \in \R^{k}\setminus \{0\}$, then 
    \begin{align}
            \|\Lambda_{d,x} \|^2 &\leq  \frac{1 + 4 \eps d}{2^d} \leq \frac{13}{12} \frac{1}{2^d} \label{eq:Mdnorm} \\
            \| \Lambda_{d,x}^t\Lambda_{d,x}  - \frac{1}{2^d} I_k  \| &\leq \frac{ 4 \eps d}{2^d} \label{eq:LxTLx} \\
            \|\Lambda_{d,x}^t \Lambda_{d,y} y -  \htilde_{x,y}\| &\leq 24 \frac{d^3 \sqrt{\eps}}{2^d} \|y\| \label{eq:LxTLy}
    \end{align}
\end{prop}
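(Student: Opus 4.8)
The plan is to derive all three bounds from Proposition \ref{prop:concGandTheta}, treating the quantities $\Lambda_{d,x}^t \Lambda_{d,x}$ and $\Lambda_{d,x}^t \Lambda_{d,y} y$ as bilinear/linear forms and testing them against arbitrary vectors in $\R^k$. The key observation is that for any $u \in \R^k$ we have $\Lambda_{d,x} u = $ (something that looks like $G_d$ applied along the ReLU pattern of $x$), and more precisely, for the specific vectors $x$ and $y$ themselves, $\Lambda_{d,x} x = G(x)$ and $\Lambda_{d,y} y = G(y)$. So \eqref{eq:LxTLx} and \eqref{eq:LxTLy} should follow by polarization from the scalar concentration statements \eqref{eq:boundGx}, \eqref{eq:concntrGxGy} already in hand, once we check that the \RRWDC lets us control $\langle \Lambda_{d,x} u, \Lambda_{d,x} v\rangle$ for general $u,v$, not just for $u=v=x$.

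First I would prove \eqref{eq:Mdnorm}: since $\|\Lambda_{d,x}\|^2 = \sup_{\|u\|=1} \|\Lambda_{d,x} u\|^2$ and $\Lambda_{d,x} u$ equals $G$ applied to $u$ along the activation pattern determined by $x$, I want a version of \eqref{eq:boundGx} that holds for this "frozen pattern" map. The natural route is to unfold $\|\Lambda_{d,x} u\|^2 = \langle (W_d)_{+,G_{d-1}(x)}^T (W_d)_{+,G_{d-1}(x)} \Lambda_{d-1,x} u, \Lambda_{d-1,x} u\rangle$ and apply the \RRWDC inequality \eqref{eq:RRWDC} with $x_1,x_2,x_3,x_4$ chosen so that $G_{i-1}(x_1)-G_{i-1}(x_2)$ realizes $\Lambda_{d-1,x} u$ — here one uses that differences of points in the range of $G_{i-1}$ that share $x$'s activation pattern span the relevant vectors (this is exactly why the \RRWDC is phrased with differences). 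Then $\langle Q_{G_{d-1}(x),G_{d-1}(x)} v, v\rangle = \frac{1}{2}\|v\|^2$ since $\theta_{r,r}=0$ makes $Q_{r,r} = \frac12 I$, giving $\|\Lambda_{d,x} u\|^2 = \frac12 \|\Lambda_{d-1,x} u\|^2 + O_1(\eps \|\Lambda_{d-1,x}u\|^2)$; induct down to layer $0$ where $\Lambda_{0,x} = I_k$. This yields \eqref{eq:Mdnorm}, and subtracting the $j=d$ case of the two-sided bound from $\frac{1}{2^d}\|u\|^2$ yields \eqref{eq:LxTLx} after bounding the accumulated error by $\frac{4\eps d}{2^d}$ (geometric-series bookkeeping, same as the proof of \eqref{eq:boundGx}).

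For \eqref{eq:LxTLy}, I would test $\Lambda_{d,x}^t \Lambda_{d,y} y - \htilde_{x,y}$ against an arbitrary unit vector $u$, i.e. bound $|\langle \Lambda_{d,y} y, \Lambda_{d,x} u\rangle - \langle u, \htilde_{x,y}\rangle|$. Unfolding one layer, $\langle \Lambda_{d,y} y, \Lambda_{d,x} u\rangle = \langle (W_d)_{+,G_{d-1}(x)}^T (W_d)_{+,G_{d-1}(y)} \Lambda_{d-1,y} y, \Lambda_{d-1,x} u\rangle$, and the \RRWDC replaces the middle matrix by $Q_{G_{d-1}(x),G_{d-1}(y)}$ up to $\eps\|\Lambda_{d-1,y}y\|\|\Lambda_{d-1,x}u\| = \eps\|G_{d-1}(y)\|\cdot O(\|u\|/2^{(d-1)/2})$ error. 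Using the explicit form of $Q_{r,s}$ and the angle recursion $\theta_j \approx g(\theta_{j-1})$ from \eqref{eq:thetaj}, one peels off layers exactly as in the derivation of \eqref{eq:concntrGxGy}, accumulating the $\htilde$ terms; taking the supremum over $\|u\|=1$ gives the operator-level bound \eqref{eq:LxTLy}. The main obstacle I anticipate is the first step of each argument: justifying that $\Lambda_{d-1,x} u$ (for arbitrary $u$) is expressible as — or well-approximated by — a difference $G_{d-1}(x_1) - G_{d-1}(x_2)$ of range points so that the \RRWDC is actually applicable. This requires the observation that on the linear region of $\R^k$ containing $x$ (where the activation pattern is constant), $G_{d-1}$ is affine, so differences of nearby points there produce all of $\Lambda_{d-1,x}\cdot(\text{that region's direction space})$; combined with rescaling and the fact that one only needs the inequality, not an identity, this closes the gap, but it is the step where the "range restricted" nature of the condition must be used carefully rather than treated as a black-box matrix bound. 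The rest is the same geometric-series error propagation already carried out for Proposition \ref{prop:concGandTheta}.
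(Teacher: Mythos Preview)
Your proposal is correct and follows essentially the same route as the paper: the paper also uses local linearity at the differentiable point $x$ to write $\Lambda_{j-1,x} z = G_{j-1}(x+z) - G_{j-1}(x)$ for small $z$ (exactly the ``difference of range points'' observation you flag as the main obstacle), applies the \RRWDC layer by layer with $Q_{r,r}=\tfrac12 I$ to get the recurrence $M_d = \tfrac12 M_{d-1} + \|M_{d-1}\|O_1(\eps)$ for $M_d := \Lambda_{d,x}^T\Lambda_{d,x}$, and then unfolds $\Gamma_d := \Lambda_{d,x}^T\Lambda_{d,y}y$ via the explicit form of $Q_{x_{j-1},y_{j-1}}$ and the angle recursion, deferring the final bookkeeping to the proof of \eqref{eq:concntrGxGy}. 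The only cosmetic difference is that the paper works directly with the vector/matrix recurrences rather than testing against a unit vector $u$, but since the matrices involved are symmetric (for \eqref{eq:LxTLx}) and the \RRWDC bound is homogeneous (for the rescaling you mention), the two formulations are equivalent.
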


The next proposition uses the \RRWDC to bound the local Lipschitz constant of the $\relu$-networks $\{G_j\}_{j \in [d]}$.
\begin{prop}\label{prop:Lip}
    Suppose that $x \in \mathcal{B}(x, d \sqrt{\eps} \|y\|)$ and $G$ 
    satisfies the \RRWDC with $\eps < 1/(200)^4/d^6$. Then for every $i \in [d]$, it holds that
    \begin{equation}\label{eq:Glip}
        \|G_{i}(x) - G_{i}(y) \| \leq \frac{1.2}{2^{i/2}} \| x - y\|
    \end{equation}
\end{prop}

The next proposition is used to show that when $x$ is close to $y$, 
the gradient of the objective function points in a direction that decreases the distance between of $x$ and $y$.

\begin{prop}\label{prop:convx}
    Suppose $x \in \mathcal{B}(y, d \sqrt{\eps} \|y\|)$ is a differentiable point for $G$, and the \RRWDC holds with $\eps < 1/(200)^4/d^6$. Then it holds that
    \begin{equation}\label{eq:convx}
        \Lambda_x^T (\Lambda_x x - \Lambda_y y ) = \frac{1}{2^d} (x - y) + \frac{1}{2^d} \frac{1}{16} \| x - y\| O_1(1)
    \end{equation}
\end{prop}

We can now prove Theorem \ref{thm:mainCS_det}. 

\begin{proof}[Proof of Theorem \ref{thm:mainCS_det}]

	The proof of Theorem 3.1 in \cite{huang2021provably} only uses the inequality \eqref{eq:boundGjx}-\eqref{eq:convx}, which are proved for a network satisfying the WDC. The previous propositions have shown that such inequalities hold under the weaker \RRWDCs. Therefore from the proof of Theorem 3.1  in \cite{huang2021provably} combined with the Propositions \ref{prop:concGandTheta}-\ref{prop:convx}, we obtain automatically the proof of Theorem \ref{thm:mainCS_det}.
\end{proof}

\subsection{Supplemental Results for Section \ref{appx:Props}}\label{appx:suppProp}

\subsubsection*{Proof of Proposition \ref{prop:concGandTheta}}

\begin{proof}
For $x, y \in \R^k$ and $j \in [d]$, below we write $x_j := G_j(x)$ and $y_j := G_j(y)$.
\medskip

\noindent - \textit{Proof of \eqref{eq:boundGjx}}\\
    Notice that by \eqref{eq:RRWDC}  for $x \in \R^k$
    \[
        \Big(\frac{1}{2} - \eps \Big)  \| x_{j-1} \|^2 \leq \|x_j\|^2  \leq \Big(\frac{1}{2} + \eps \Big)  \| x_{j-1} \|^2,
    \]
    which proceeding by induction gives \eqref{eq:boundGjx}. 
    \medskip
    
    \noindent - \textit{Proof of \eqref{eq:boundGx}}\\
    Next observe that since $\log (1 + z)\leq z$, $e^z \leq 1 + 2z$ for $z < 1$ and $2 d \eps \leq 1$, from \eqref{eq:boundGjx} we have 
    \[
        \|G_d(x)\|^2 \leq \frac{(1 + 2 \eps)^d}{2^d} \|x\|^2 \leq \frac{1}{2^d} e^{d \log (1 + 2 \eps)} \|x\| \leq \frac{1 + 4 \eps d}{2^d} \|x\|,
    \]
    which corresponds to \eqref{eq:boundGx}.
    \medskip
    
    \noindent - \textit{Proof of \eqref{eq:thetaj}}\\
    Assume that $x, y \in \R^{k} \setminus \{0\}$. Then, the assumption $2 d \eps \leq 1$ 
    and the lower bound in \eqref{eq:boundGjx} imply that $\theta_j$ are well-defined for all $j \in [d]$. 
    To prove then \eqref{eq:thetaj} notice that it is sufficient to prove that for any $j \in [d]$ it holds that
    \[
        \Big| \cos \theta_j - \frac{(\pi - \theta_{j-1}) \cos \theta_{j-1} + \sin \theta_{j-1}}{\pi} \Big| \leq 5 \eps 
    \]
    By homogeneity of the $\relu$ activation function, we can assume without loss of generality that $\| x_{j-1} \| = \| y_{j-1} \| = 1$. Let  
    \[
        \begin{aligned}
            \delta_1 &:= \langle x_{j-1}, \big( \Wjpx^T \Wjpy - Q_{x_{j-1}, y_{j-1}}\big) y_{j-1} \rangle  \\ 
            \delta_2 &:= \langle x_{j-1}, \big( \Wjpx^T \Wjpx - I_k/2\big) y_{j-1} \rangle  \\ 
            \delta_3 &:= \langle y_{j-1}, \big( \Wjpy^T \Wjpy - I_k/2\big) y_{j-1} \rangle  \\ 
        \end{aligned}
    \]
    and notice that by the \RRWDC we have $\max(|\delta_1|, |\delta_2|, |\delta_3|) \leq \eps$. Thus,
    \[
    \begin{aligned}
        \cos \theta_{j} 
        &= \frac{\langle x_{j}, y_{j}\rangle}{\|x_{j}\| \|y_{j}\|} \\
        &= \frac{\langle x_{j-1}, \Wjpx^T \Wjpy y_{j-1} \rangle}{\sqrt{\langle x_{j-1}, \Wjpx^T \Wjpx x_{j-1} \rangle \langle y_{j-1}, \Wjpy^T \Wjpy y_{j-1} \rangle}}
        \\
        &= 2 \frac{\langle x_{j-1}, Q_{x_{j-1}, y_{j-1}} y_{j-1} \rangle + \delta_1}{\sqrt{(1 + 2 \delta_2)(1 + 2 \delta_3)}}.
    \end{aligned}
    \]
    Finally, notice that $2 \langle x_{j-1}, Q_{x_{j-1}, y_{j-1}} y_{j-1} \rangle = [(\pi - \theta_{j-1}) \cos \theta_{j-1} + \sin \theta_{j-1}]/\pi$ so 
    \begin{align*}
        |\cos \theta_j - 2 \langle x_{j-1}, Q_{x_{j-1}, y_{j-1}} y_{j-1} \rangle| &\leq 2 |\langle x_{j-1}, Q_{x_{j-1}, y_{j-1}} y_{j-1} \rangle| \Bigg| 1 - \frac{1}{\sqrt{(1 + 2 \delta_2)(1 + 2 \delta_3)}} \Bigg| \\
        &\qquad + \frac{2 | \delta_1 |}{\sqrt{(1 + 2 \delta_2)(1 + 2 \delta_3)}}\\
        &\leq \Big| 1 - \frac{1}{(1 - 2 \eps)} \Big| + \frac{2 \eps }{ (1 - 2 \eps)}\\
        &\leq 5 \eps
    \end{align*}
     where the second inequality follows from $|2 \langle x_{j-1}, Q_{x_{j-1}, y_{j-1}} y_{j-1} \rangle| \leq 1$ 
     and $\max(|\delta_1|, |\delta_2|, |\delta_3|) \leq \eps$, and the third inequality from $10 \eps < 1$.
     \medskip
    
    \noindent - \textit{Proof of \eqref{eq:LowProdGxGy}}\\
    By \eqref{eq:boundGjx} and $\eps \leq 1/2$, it follows that $\|x_d\| \|y_d\| \geq \frac{(1 - 2 \eps)^d}{2^d} \|x\| \|y\| \frac{1 - 2 d \eps}{2^d}$.
    Moreover, let $\delta := 4 \sqrt{\eps}$, 
    then by \eqref{eq:thetaj} we have that $\theta_j = g(\theta_{j-1}) + O_1(\delta)$. 
    Thus, $\theta_d = g(g(\cdots g(g(\theta_0) + O_1(\delta)) + O_1(\delta) \cdots) +O_1(\delta))) + O_1(\delta)$ 
    and for $\bar{\theta} = g^{\circ d}(\theta_0)$, so that, using $g'(\theta) \leq 1$ for all $\theta$, we have 
    \begin{equation}\label{eq:thetabard}
        |\theta_d - \bar{\theta}_d | \leq d \delta. 
    \end{equation}
    Then by \eqref{eq:thetabard}, $\bar{\theta}_d \leq \cos^{-1}({1}/{\pi})$ for $d\geq 2$, and $16 \pi d \sqrt{\eps} < 1$, 
    follows that $\cos \theta_d \geq 3/(4 \pi)$.  
    
    Finally, if $2 d \eps \leq 2/3$, we can then conclude that 
    \[
        \langle G(x), G(y)\rangle \geq \cos (\theta_d) \|x_d\| \|y_d\| \geq \frac{1}{4 \pi} \frac{1}{2^d} \|x\| \|y\|.
    \]
    \medskip
    
    \noindent - \textit{Proof of \eqref{eq:concntrGxGy}}\\
    The following result on a recurrence relation will be used in the subsequent analysis
    \begin{equation}\label{eq:Gammad}
        \Gamma_d = s_d \Gamma_{d-1} + r_d, \quad \Gamma_0 = y \implies \Gamma_d = \Big( \prod_{i=1}^d s_i \Big) y + \sum_{i=1}^d \Big( r_i \prod_{j=i+1}^d s_j \Big)
    \end{equation}
    Define $\Gamma_d := \langle x_d, y_d \rangle$, then
    \begin{subequations}
    \begin{align*}
        \Gamma_d &= \langle x_{d-1}, W_{d-1,+,x}^T W_{d-1,+,x} y_{d-1} \rangle \\
        &= \langle x_{d-1}, Q_{x_{d-1}, y_{d-1}} y_{d-1} \rangle + O_1(\eps) \|y_{d-1}\| \|x_{d-1}\|, \\
        &= \frac{\pi - \theta_{d-1}}{2 \pi} \Gamma_{d-1} + \frac{\sin \theta_{d-1}}{2 \pi} \|x_{d-1}\| \|y_{d-1} \| + O_1(\eps) \|y_{d-1}\| \|x_{d-1}\|, \\
        &= \frac{\pi - \theta_{d-1}}{2 \pi} \Gamma_{d-1} + \frac{\sin \theta_{d-1}}{2 \pi} \frac{\|x\| \|y\|}{2^{d-1}} 
        +  \frac{\eps}{2^{d}} \Big(\frac{4 \eps d}{\pi}+ 2(1 + 4 \eps d)\Big) \|y\| \|x\|O_1(1),\\
        &= \frac{\pi - \theta_{d-1}}{2 \pi} \Gamma_{d-1} + \frac{\sin \theta_{d-1}}{2 \pi} \frac{\|x\| \|y\|}{2^{d-1}} 
        +  11 d \eps \frac{\|y\| \|x\|}{2^d}O_1(1),
    \end{align*}
    \end{subequations}
    Where the second equality follows from the \RRWDC, the third from the definition of $Q_{p,q}$.  
    The rest of the proof proceeds as in the proof of Lemma 8 in \cite{hand2019globalIEEE}.
\end{proof}

\subsubsection*{Proof of Proposition \ref{prop:grad_conctr}}
\begin{proof}
\smallskip
${}$

\noindent - \textit{Proof of \eqref{eq:LxTLx}}.\\
Let $x \in \R^k$ be a point where $G$ is differentiable, and notice that 
for small enough $z$, by local linearity of $G$, we have $G(x + z) = \Lambda_x z$. Then the \RRWDC gives for $j \in [d]$
\begin{equation*}%
     \big| \langle \big( \Wjpx^T \Wjpy - I_k/2 \big) \Lambda_{j-1,x} z ,  \Lambda_{j-1,x} z \rangle \big| \leq \eps \|\Lambda_{j-1,x} \|^2 \|z\|^2
\end{equation*}
for all $z$, which in turn implies
\begin{equation}\label{eq:RRWDC_Lx}
    \|\Lambda_{j, x}^T \Lambda_{j, x} - \frac{1}{2}\Lambda_{j-1, x}^T \Lambda_{j-1, x} \| \leq \eps \|\Lambda_{j-1, x}^T \Lambda_{j-1, x} \|.
\end{equation}

Let now $M_d := \Lambda_{d,x}^T \Lambda_{d,x}$ with $M_0 = I_{k}$, then  
\begin{equation}\label{eq:Md}
    M_d = \frac{1}{2} M_{d-1} + \|M_{d-1}\| O_1(\eps). 
\end{equation}
We then obtain
\begin{equation*}
    \| M_d \| \leq \Big( \frac{1}{2} + \eps \Big) \| M_{d-1}\| \leq  \frac{(1 + 2 \eps)^d}{2^d} \| M_0 \| \leq \frac{1 + 4 \eps d}{2^d},
\end{equation*}
where the second inequality, and the third inequality uses $2 d \eps \leq 1$ and the same reasoning as in the proof of \eqref{eq:boundGx}.
From \eqref{eq:Mdnorm} and \eqref{eq:Md}  we obtain the following recurrence relation
\[
M_d 
= \frac{1}{2} M_{d-1} + O_1 \Big( \eps \frac{1 + 4 \eps (d-1)}{2^{d-1}} \Big),
\]
which, using \eqref{eq:Gammad} and $4 \eps d \leq 1$, gives 
\[
\begin{aligned}
M_d &= \frac{1}{2} I_k + \sum_{i=1}^d O_1 \Big( \eps \frac{1 + 4 \eps (i-1)}{2^{i-1}} \Big) \frac{1}{2^{d-i}} \\
&= \frac{1}{2} I_k + \frac{4 \eps d}{2^d} O_1(1)
\end{aligned}
\]
\medskip

\noindent - \textit{Proof of \eqref{eq:LxTLy}}.\\
Notice again that if $x \in \R^k$ is a differentiable point for $G$, the \RRWDC gives for any $j \in [d]$
\begin{equation}\label{eq:RRWDC_Lxy}
    \|\Lambda_{j, x}^T \Lambda_{j, y} y  - \Lambda_{j-1, x}^T Q_{x_{j-1}, y_{j-1}} \Lambda_{j-1, y} y  \| \leq \eps \|\Lambda_{j-1, x} \| \|G_{j-1}(y) \|.
\end{equation}
We then let $\Gamma_d := \Lambda_{d, x}^T \Lambda_{d, y} y $ and observe that
\[
\begin{aligned}
    \Gamma_d 
    &= 
    \Lambda_{d-1, x}^T Q_{x_{d-1}, y_{d-1}} \Lambda_{d-1, y} y + \|\Lambda_{d-1, x} \| \|G_{d-1}(y) \| O_1(\eps) \\
    &= \frac{\pi - \theta_{d-1}}{2 \pi} \Gamma_{d-1} + \frac{\sin \theta_{d-1}}{2 \pi} \frac{\| y_{d-1} \|}{\|x_{d-1}\|} \Lambda_{d-1,x}^T \Lambda_{d-1,x} x + \eps \Big( \frac{1 + 4 \eps d}{2^{d-1}} \Big) \|y\|
\end{aligned}
\]
where the first equality is from \eqref{eq:RRWDC_Lxy}, and the second uses the definition of $Q_{x,y}$,  \eqref{eq:boundGx} and \eqref{eq:Mdnorm}. 
The rest of the proof follows as in the proof of Equation (7) in Lemma 8 in \cite{hand2019globalIEEE}.
\end{proof}

\subsubsection*{Proof of Proposition \ref{prop:Lip}}

\begin{lemma}\label{lemma:GjxmGjy}
Suppose $G$ satisfies the \RRWDC with constant $\eps$. Then for any $x, y \in \R^{k} \setminus \{0\}$ and $i \in [d]$, it holds that
\[
\|G_i(x) - G_i(y)\| \leq \Bigg( \sqrt{\frac{1}{2} + \eps} + \sqrt{2 (2 \eps + \theta_{i-1})} \Bigg) \|G_{i-1}(x) - G_{j-1}(y)\|
\]
where $\theta_{i-1} = \angle(G_i(x), G_i(y))$.
\end{lemma}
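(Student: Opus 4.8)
The plan is to reduce the whole estimate to a few applications of the \RRWDC at layer $i$ and the triangle inequality. Write $a := G_{i-1}(x)$, $b := G_{i-1}(y)$, $\delta := a - b = G_{i-1}(x) - G_{i-1}(y)$, and let $W := W_i$ have rows $w_1, \dots, w_{n_i}$; set $\theta := \theta_{i-1} = \angle\big(G_{i-1}(x), G_{i-1}(y)\big)$, which is well defined since $a, b \neq 0$ by \eqref{eq:boundGjx}. Since $\relu(Wv) = W_{+,v}v$ for every $v$, we have $G_i(x) - G_i(y) = W_{+,a}a - W_{+,b}b$, and the first move is to split
\[
    G_i(x) - G_i(y) = W_{+,a}(a-b) + \big(W_{+,a} - W_{+,b}\big)b,
\]
bounding the two pieces separately. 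For the first piece, I would apply the \RRWDC at layer $i$ with range-point arguments $G_{i-1}(x), G_{i-1}(x)$ — so the relevant matrix is $Q_{a,a} = \frac{1}{2} I$, as $\angle(a,a) = 0$ — and acting vectors $\delta = G_{i-1}(x) - G_{i-1}(y)$ in both slots, obtaining $\|W_{+,a}\delta\|^2 = \langle W_{+,a}^{T}W_{+,a}\delta, \delta\rangle \leq \big(\frac{1}{2} + \eps\big)\|\delta\|^2$, hence $\|W_{+,a}(a-b)\| \leq \sqrt{\frac{1}{2} + \eps}\,\|\delta\|$, which is the first summand of the claimed constant.

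For the second piece, let $S := \{\, j : \indx_{\langle w_j, a\rangle > 0} \neq \indx_{\langle w_j, b\rangle > 0}\,\}$ be the set of rows on which the two $\relu$ activation patterns disagree, so $(W_{+,a} - W_{+,b})b$ is supported on $S$ with $j$-th entry $\pm\langle w_j, b\rangle$ for $j \in S$. A short case analysis gives $|\langle w_j, b\rangle| \leq |\langle w_j, \delta\rangle|$ for every $j \in S$ (if $\langle w_j, a\rangle > 0 \geq \langle w_j, b\rangle$ then $\langle w_j, \delta\rangle \geq \langle w_j, a\rangle > 0$ and $|\langle w_j, b\rangle| = \langle w_j, \delta\rangle - \langle w_j, a\rangle < \langle w_j, \delta\rangle$, and the opposite sign case is symmetric), whence
\[
    \big\|(W_{+,a} - W_{+,b})b\big\|^2 = \sum_{j \in S}\langle w_j, b\rangle^2 \;\leq\; \sum_{j \in S}\langle w_j, \delta\rangle^2 = \big\|(W_{+,a} - W_{+,b})\delta\big\|^2.
\]
Then I would expand $\|(W_{+,a} - W_{+,b})\delta\|^2 = \|W_{+,a}\delta\|^2 + \|W_{+,b}\delta\|^2 - 2\langle W_{+,a}^{T}W_{+,b}\delta, \delta\rangle$ and apply the \RRWDC at layer $i$ three more times (with $Q$-argument pairs $(a,a)$, $(b,b)$, $(a,b)$ and acting vectors $\delta$) to get $\|(W_{+,a} - W_{+,b})\delta\|^2 \leq \|\delta\|^2 - 2\langle Q_{a,b}\delta, \delta\rangle + 4\eps\|\delta\|^2$. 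Using $Q_{a,b} = \frac{\pi - \theta}{2\pi}I + \frac{\sin\theta}{2\pi}M_{\hat a\leftrightarrow\hat b}$, the bound $\|M_{\hat a\leftrightarrow\hat b}\| \leq 1$, and $\theta \in [0,\pi]$ (so $\sin\theta \geq 0$), one has $\langle Q_{a,b}\delta, \delta\rangle \geq \frac{\pi - \theta - \sin\theta}{2\pi}\|\delta\|^2$, hence $\|(W_{+,a} - W_{+,b})\delta\|^2 \leq \big(\frac{\theta + \sin\theta}{\pi} + 4\eps\big)\|\delta\|^2 \leq 2(2\eps + \theta)\|\delta\|^2$ by $\sin\theta \leq \theta$ and $2/\pi \leq 2$. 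This gives $\|(W_{+,a} - W_{+,b})b\| \leq \sqrt{2(2\eps + \theta_{i-1})}\,\|\delta\|$, and the triangle inequality applied to the splitting above then yields the claimed bound.

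I expect the only nonroutine part to be the second piece: recognizing that the disagreement set $S$ is controlled pointwise by $|\langle w_j, \delta\rangle|$, and then rewriting $\sum_{j \in S}\langle w_j, \delta\rangle^2$ as $\|(W_{+,a} - W_{+,b})\delta\|^2$ so that the \RRWDC — which only constrains inner products of differences of points in the range of $G_{i-1}$ — becomes applicable. The remaining facts ($Q_{a,a} = \frac{1}{2}I$, $\|M_{\hat a\leftrightarrow\hat b}\| \leq 1$, $\sin\theta \leq \theta$, $2/\pi \leq 2$) are elementary.
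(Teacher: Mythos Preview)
Your proof is correct and follows essentially the same route as the paper: the same splitting $G_i(x)-G_i(y)=W_{+,a}(a-b)+(W_{+,a}-W_{+,b})b$, the same pointwise inequality $|\langle w_j,b\rangle|\le|\langle w_j,\delta\rangle|$ on the disagreement set to pass to $\|(W_{+,a}-W_{+,b})\delta\|^2$, and the same \RRWDC applications to reduce to $\|I/2-Q_{a,b}\|$. The only difference is cosmetic---you expand $\|(W_{+,a}-W_{+,b})\delta\|^2$ directly, while the paper writes it as the sum of two symmetric quadratic forms $\delta^T W_{+,a}^T(W_{+,a}-W_{+,b})\delta+\delta^T W_{+,b}^T(W_{+,b}-W_{+,a})\delta$ before bounding; both arrive at the identical constant $2(2\eps+\theta_{i-1})$.
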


\begin{proof}[Proof of Lemma \ref{lemma:GjxmGjy}]
We have
\begin{equation}\label{eq:GjxGjy0}
    \|G_j(x) - G_j(y)\| \leq  \| (W_j)_{+,x_{j-1}} (x_{j-1} - y_{j-1}) \| + \|\big( \Wjpx - \Wjpy \big) y_{j-1}\|. 
\end{equation}
We begin analyzing the first term, noticing that by the \RRWDC
\begin{align}
 \| \Wjpx (x_{j-1} - y_{j-1})\|^2 &= (x_{j-1} - y_{j-1})^T (\Wjpx^T \Wjpx - \frac{1}{2} I_{n_1}) (x_{j-1} - y_{j-1})  + \frac{1}{2} \|x_{j-1} - y_{j-1}\|^2 \nonumber \\
 &\leq \Big(\frac{1}{2} + \eps \Big) \|x_{j-1} - y_{j-1} \|^2 \label{eq:GjxGjy1}
\end{align}
We next analyze the second term. Let $W_{j,i} \in \R^{1 \times n_{j-1}}$ be the $i$-th row of $W_{j}$ then
\begin{align}
    \| \big( \Wjpx - \Wjpy \big) y_{j-1} \|^2
    &= \sum_{i=1}^n \big( \indx_{\Wji  x_{j-1} > 0} - \indx_{\Wji  y_{j-1} > 0}\big)^2  (\Wji \,  y_{j-1})^2 \nonumber \\
    &\leq \sum_{i=1}^n \big( \indx_{\Wji  x > 0} - \indx_{\Wji  y > 0} \big)(\Wji  (x_{j-1} - y_{j-1}) \nonumber \\
    &= \sum_{i=1}^n \indx_{\Wji  x > 0} \indx_{\Wji  y \leq 0} \Wji  (x_{j-1} - y_{j-1}) \nonumber \\ 
    &\quad+ \sum_{i=1}^n \indx_{\Wji  x \leq 0} \indx_{\Wji  y > 0} \Wji  (x_{j-1} - y_{j-1}) \nonumber \\
    &= (x_{j-1} - y_{j-1})^T (W_j)_{+,x_{j-1}}^T \Big( (W_j)_{+,x_{j-1}} - (W_j)_{+,y_{j-1}} \Big) (x_{j-1} - y_{j-1}) \nonumber \\
    &\quad+ (x_{j-1} - y_{j-1})^T (W_j)_{+,y_{j-1}}^T \Big( (W_j)_{+,y_{j-1}} - (W_j)_{+,x_{j-1}} \Big) (x_{j-1} - y_{j-1}). \label{eq:GjxGjy2}
\end{align}
Observe now that by the \RRWDC we have
\begin{align*}
    |(x_{j-1}-y_{j-1})^T &(W_j)_{+,y_{j-1}}^T \Big( (W_j)_{+,y_{j-1}} - (W_j)_{+,x_{j-1}} \Big) (x_{j-1} - y_{j-1})|  \nonumber \\  
    &\leq |(x_{j-1}-y_{j-1})^T  \Big( (W_j)_{+,y_{j-1}}^T (W_j)_{+,y_{j-1}} - I_k/2  \Big) (x_{j-1} - y_{j-1})| \nonumber \\
    &+ |(x_{j-1}-y_{j-1})^T  \Big( (W_j)_{+,y_{j-1}}^T (W_j)_{+,x_{j-1}} - Q_{x_{j-1}, y_{j-1}}  \Big) (x_{j-1} - y_{j-1})| \nonumber \\
    &\,+ |(x_{j-1}-y_{j-1})^T  \Big( I_{n_{i-1}}/2 - Q_{x_{j-1}, y_{j-1}}  \Big) (x_{j-1} - y_{j-1})| \nonumber\\ 
    &\leq (2 \eps + \theta_{j-1})  \| x_{j-1} - y_{j-1}\|^2,
\end{align*}
which together with \eqref{eq:GjxGjy2} gives
\begin{equation}\label{eq:huangTrick}
    \| \big( \Wjpx - \Wjpy \big) y_{j-1} \|^2 \leq 2 (2 \eps + \theta_{j-1})  \| x_{j-1} - y_{j-1}\|^2.
\end{equation}
We conclude using \eqref{eq:GjxGjy1} and \eqref{eq:huangTrick} in \eqref{eq:GjxGjy0}. 
\end{proof} 

\subsubsection*{Proof of Proposition \ref{prop:convx}}

We next prove the convexity-like property in Proposition \ref{prop:convx}.

\begin{proof}[Proof of Proposition \ref{prop:convx}]

    We begin observing that by \eqref{eq:thetabard} we have $| \theta_i - \bar{\theta}_i |\leq 4 i \sqrt{\eps} \leq 4 d \sqrt{\eps}$. Furthermore, 
    since $x \in \mathcal{B}(y, d \sqrt{\eps} \|y\|)$ it follows that 
    \[
    \bar{\theta}_i \leq \bar{\theta}_{0} \leq 2 d \sqrt{\eps}.
    \]
    Thus by the assumption on $\eps$, we have
    \begin{equation}\label{eq:boundthetieps}
        \sqrt{2} \sqrt{\theta_i + 2 \eps} 
        \leq \sqrt{2} \sqrt{\bar{\theta}_i + 4 d \sqrt{\eps} + 2 \eps}
        \leq \sqrt{2} \sqrt{2 d \sqrt{\eps} + 4 d \sqrt{\eps} + 2 \eps} \leq \frac{1}{30 \sqrt{2} d} 
    \end{equation}
    
    Let now $\Gamma_d := \Lambda_{d,x}^T (\Lambda_{d,x} x - \Lambda_{d,y} y )$. Then notice that
    \begin{align}
        \Gamma_d &= \Lambda_{d-1,x}^T W_{d,+,x}^T  ( W_{d, +, x} \Lambda_{d-1,x} x - W_{d, +, y} \Lambda_{d-1,y} y ) \nonumber \\
        &=  \Lambda_{d-1,x}^T W_{d,+,x}^T W_{d, +, x} (  \Lambda_{d-1,x} x -  \Lambda_{d-1,y} y ) + \Lambda_{d,x}^T  (W_{d, +, x} - W_{d, +, y}) \Lambda_{d-1,y} y \nonumber \\
        &= \frac{1}{2} \Gamma_{d-1} + \eps \|\Lambda_{d-1,x}\| \|\Lambda_{d-1,x} x -  \Lambda_{d-1,y} y\| O_1(1) + \|\Lambda_{d,x}\| \| (W_{d, +, x} - W_{d, +, y}) \Lambda_{d-1,y} y\| O_1(1) \nonumber\\
        &= \frac{1}{2} \Gamma_{d-1} + \Big( \eps + \sqrt{\frac{1}{2} + \eps} \sqrt{2 (2 \eps + \theta_{d-1})}\Big) \|\Lambda_{d-1}\| \|\Lambda_{d-1,x} x -  \Lambda_{d-1,y} y\| O_1(1) \nonumber \\
        &= \frac{1}{2} \Gamma_{d-1} + \Big( \eps + \sqrt{\frac{1}{2} + \eps} \sqrt{2 (2 \eps + \theta_{d-1})}\Big) \frac{1.2 \sqrt{1 + 4 \eps d}}{2^{d-1}} \|x - y\| O_1(1) \nonumber \\
        &= \frac{1}{2} \Gamma_{d-1} + 2 \Big( \frac{1}{200^4 d^6} + \frac{1}{30 \sqrt{2} d } \Big) \frac{\|x - y\|}{2^{d-1}}  O_1(1) \label{eq:recurrConvx}
    \end{align}
    where the third equality follows from the \RRWDCs, the fourth the \RRWDC and \eqref{eq:huangTrick}, the fifth from \eqref{eq:Mdnorm} 
    and Proposition \ref{prop:Lip}, and sixth from \eqref{eq:boundthetieps} and the assumption on $\eps$.
    Finally, from \eqref{eq:recurrConvx} and \eqref{eq:Gammad}  we obtain
    \[
        \Gamma_d = \frac{1}{2^d} \|x - y\| + \frac{1}{16} \frac{\|x - y\|}{2^d} O_1(1)
    \]
\end{proof}

\section{Proof of  Lemma \ref{prop:RandRRWDC}}\label{sec:ProofR2WDC}

In this section, we prove that a generative network $G$ with random weights satisfies the \RRWDC with high-probability 
(Lemma \ref{prop:RandRRWDC}). Our proof is inspired by the proof of Proposition 3 in \cite{leong2020}. 
\smallskip

Notice that because of the piecewise-linear nature of the $\relu$ activation function, 
the output of a $\relu$ network is a subset of a union of affine subspaces. 
The following lemma from \cite{joshi2021plugin} provides an upper bound on the number of such subspaces.
\begin{lemma}[Lemma 7 in \cite{joshi2021plugin}]\label{lemma:joshiRange}
Consider a generative network $G$ as in \eqref{eq:Gx} and assume that  $n_i \geq k$ for $i \in [d]$. 
Then for $i \in [d]$, range$(G_i)$ is contained in a union of affine subspaces. Precisely,
\[
	\text{range}(G_i) \subseteq \cup_{j \in [\Psi_i]} S_{i,j}\quad \text{where} \quad \Psi_i \leq \prod_{j=1}^i \Big( \frac{e n_j}{k}\Big)^k.
\]
Here each $S_{i,j}$ is some $k$-dimensional affine subspace (which depends on $\{W_{\ell}\}_{\ell \in [i]}$) in $\R^{n_i}$.
\end{lemma}

We next give the main result upon which the proof of Proposition \ref{prop:RandRRWDC} rests.
\begin{prop}\label{prop:unifSubs}
Fix $0 < \eps < 1$ and $\ell < n$. Let $W \in \R^{m \times n}$ have i.i.d. $\mathcal{N}(0, 1/m)$. 
Let $R,S$ be $\ell$-dimensional subspaces of $\R^n$, and $T$ be an $\ell'$-dimensional 
subspaces of $\R^n$ with $l' \geq l$. Then if $m \geq C_\eps \ell'$, we have that 

\begin{equation}\label{eq:Subs_uv}
    | \langle \Wpr^T \Wps u, v\rangle - \langle Q_{r,s} u, v \rangle  | \leq \eps \| u \|_2 \|v\|_2 \quad \forall \; u, v \in T, \, \forall r \in R, \, \forall s \in S, 
\end{equation}
with probability exceeding
\[
	1 - \gamma \Big(\frac{e \, m}{\ell}\Big)^{2 \ell} \exp(- {c}_\eps m)
\]
Furthermore, let $U = \bigcup_{i = 1}^{N_1} U_i$, $V = \bigcup_{j = 1}^{N_2} V_j$ $V = \bigcup_{j = 1}^{N_2} V_j$ , $R = \bigcup_{p = 1}^{N_3} R_p$, 
and $S = \bigcup_{q = 1}^{N_4} S_q$ be union of subspaces of $\R^n$ of dimension at most $\ell$. Then if $m \geq 2 C_\eps \ell'$
\begin{equation}\label{eq:Subs_UV}
    | \langle \Wpr^T \Wps u, v\rangle - \langle Q_{r,s} u, v \rangle  | \leq \eps \| u \|_2 \|v\|_2 \quad \forall \; u \in U,\, v \in V, \, \forall r \in R, \, \forall s \in S, 
\end{equation}
with probability exceeding
\[
	1 - \gamma N_1 N_2 N_3 N_4 \Big(\frac{e \, m}{\ell}\Big)^{2 \ell} \exp(- {c}_\eps m).
\]
 Here $c_\eps$ depends polynomially on $\eps$, $C_\eps$ depends polynomially on $\eps^{-1}$, 
 and $\gamma$ is a positive universal constant. 
\end{prop}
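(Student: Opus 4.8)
I would prove \eqref{eq:Subs_uv} first and then obtain \eqref{eq:Subs_UV} by a union bound. Since $\Wpr$ and $Q_{r,s}$ depend on $r$ only through $r/\|r\|_2$, and the inequalities are homogeneous of degree one in $u$ and in $v$ separately, it suffices to argue for unit vectors $r,s,u,v$. Writing the rows of $W$ as $m^{-1/2}g_i$ with $g_i\sim\mathcal N(0,I_n)$ i.i.d., one has $\langle \Wpr^T\Wps u,v\rangle=\tfrac1m\sum_{i=1}^m \bbone[\langle g_i,r\rangle>0]\,\bbone[\langle g_i,s\rangle>0]\,\langle g_i,u\rangle\langle g_i,v\rangle$, and the Gaussian identity $\EX\big[\bbone[\langle g,r\rangle>0]\,\bbone[\langle g,s\rangle>0]\,g g^T\big]=Q_{r,s}$ identifies the common mean of the summands with $\langle Q_{r,s}u,v\rangle$. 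Each summand is a bounded indicator times a product of two standard Gaussians, hence sub-exponential with an absolute $\psi_1$-norm, so Bernstein's inequality gives, for fixed unit $r,s,u,v$ and $0<t\le 1$, $\PX\big(|\langle \Wpr^T\Wps u,v\rangle-\langle Q_{r,s}u,v\rangle|>t\big)\le 2e^{-c m t^2}$. Taking $t$ a small fixed fraction of $\eps$ produces the exponent $c_\eps m$ with $c_\eps$ polynomial in $\eps$; this is the base estimate.

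Next, fix unit $r\in R$, $s\in S$. The operator norm of $W$ restricted to $\operatorname{span}(R\cup S\cup T)$ (a subspace of dimension at most $2\ell+\ell'\le 3\ell'$) is at most $2$ except on an event of probability $e^{-cm}$ once $m\gtrsim\ell'$, and together with $\|Q_{r,s}\|_2\le 1/2$ this bounds the operator norm of $P_T(\Wpr^T\Wps-Q_{r,s})P_T$ by an absolute constant. By the standard net argument for bilinear forms it then suffices to control $|\langle(\Wpr^T\Wps-Q_{r,s})\tilde u,\tilde v\rangle|$ over a $1/4$-net of the unit sphere of $T$, of size $C^{\ell'}$; the base estimate and a union bound over net pairs cost a factor $C^{2\ell'}$, absorbed into $e^{-c_\eps m}$ provided $m\ge C_\eps\ell'$ with $C_\eps$ polynomial in $\eps^{-1}$. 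This yields \eqref{eq:Subs_uv} for each \emph{fixed} pair $(r,s)$ with failure probability $e^{-c_\eps m}$.

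The crux is to make the bound uniform over $r\in R$, $s\in S$, where the obstruction is that $r\mapsto\Wpr=\diag(Wr>0)W$ is discontinuous. I would take a net $\mathcal N_R$ of the unit sphere of $R$ (and $\mathcal N_S$ of that of $S$) at a scale $\delta$ equal to a sufficiently small power of $\eps$, so that $|\mathcal N_R|,|\mathcal N_S|\le (C/\delta)^{\ell}$, which is dominated by $(e m/\ell)^{\ell}$ since $m\ge C_\eps\ell$. For $\tilde r\in\mathcal N_R$ and any $r$ with $\angle(r,\tilde r)\le\delta$, the coordinates where $\operatorname{sign}\langle w_i,r\rangle\ne\operatorname{sign}\langle w_i,\tilde r\rangle$ all lie in $\{i:|\langle w_i,\tilde r\rangle|\le\delta\|w_i\|_2\}$, a set of size at most $C\delta m$ outside an event of probability $e^{-c\delta m}$ (Chernoff, union over $\mathcal N_R$); and a restricted-isometry estimate for $W$ over all row-subsets of that size (affordable because $\delta\log(1/\delta)$ is small, with an extra $T$-net absorbed as before) gives $\|(\Wpr-W_{+,\tilde r})P_T\|_2\le \eps/8$. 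Expanding the bilinear form and using the operator-norm bound, $\langle \Wpr^T\Wps u,v\rangle$ differs from $\langle W_{+,\tilde r}^T W_{+,\tilde s}u,v\rangle$ by $O(\eps)$; reducing $u,v$ to the $T$-net costs a further $O(\delta)$; and since $Q_{r,s}$ is Lipschitz in $(r/\|r\|_2,s/\|s\|_2)$ with an absolute constant, $\|Q_{r,s}-Q_{\tilde r,\tilde s}\|_2=O(\delta)$. Choosing $\delta$ so these errors sum to at most $\eps/2$ and applying the fixed-pair bound at scale $\eps/2$ over $\mathcal N_R\times\mathcal N_S$ gives \eqref{eq:Subs_uv} uniformly: the hyperplane-arrangement and sign-flip combinatorial factors are dominated by $(e m/\ell)^{2\ell}$, all $\ell'$-dependent costs are absorbed into $e^{-c_\eps m}$ through $m\ge C_\eps\ell'$, and the bookkeeping forces $c_\eps$ polynomial in $\eps$ and $C_\eps$ polynomial in $\eps^{-1}$.

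Finally, for \eqref{eq:Subs_UV} I would apply \eqref{eq:Subs_uv} to each quadruple $(R_p,S_q,\operatorname{span}(U_i\cup V_j))$: the space $\operatorname{span}(U_i\cup V_j)$ has dimension at most $2\ell\le 2\ell'$, so the hypothesis $m\ge 2C_\eps\ell'$ is exactly what is needed to invoke the first part for each quadruple, and a union bound over the $N_1N_2N_3N_4$ quadruples yields the claimed probability (the factor $2$ giving the slack for the enlarged ambient dimension). I expect the uniformization over $r$ and $s$ in the third step to be the main obstacle — specifically, bounding the sign-flip set and the accompanying sparse restricted-isometry term with constants good enough to keep $\delta$ polynomially large in $\eps$ while leaving every combinatorial union-bound factor absorbed into $e^{-c_\eps m}$.
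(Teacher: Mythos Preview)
Your overall architecture matches the paper's proof almost exactly: fixed-$(r,s)$ concentration via Bernstein plus a net on $T$ (this is the paper's Lemma~\ref{lemma:fixSubs}), Lipschitz continuity of $Q_{r,s}$ (Lemma~\ref{lemma:QLip}), a restricted-isometry bound for the residual over small row subsets (Lemma~\ref{lemma:Wtilde}), and the extension to unions of subspaces by applying the first part to $\operatorname{span}(U_i\cup V_j)$ together with $(R_p,S_q)$ and taking a union bound. The one substantive divergence is in how you uniformize over $r\in R$, $s\in S$, and there your argument has a gap.

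Your containment ``sign-flips $\subset\{i:|\langle w_i,\tilde r\rangle|\le\delta\|w_i\|_2\}$'' is correct, but the size claim is not: since $w_i=m^{-1/2}g_i$ with $g_i\sim\mathcal N(0,I_n)$, the event $|\langle g_i,\tilde r\rangle|\le\delta\|g_i\|_2$ has probability of order $\min(1,\delta\sqrt{n})$, not $\delta$, so for $\delta$ polynomial in $\eps$ and $n$ large the threshold set is essentially all of $[m]$, and no Chernoff bound will save $C\delta m$. The paper sidesteps this by \emph{not} using a metric net for $r,s$. Instead (Lemma~\ref{lemma:conseqTess}) it takes $R_0$ to contain one point from each open cell of the hyperplane arrangement cut on the sphere of $R$ by the rows of $W$; by hyperplane-arrangement counting $|R_0|\le(em/\ell)^\ell$, and a random-tessellation estimate guarantees each cell has diameter at most $\eps$ so that $\|Q_{r,s}-Q_{r_0,s_0}\|$ is small. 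The point of this adaptive net is that for any $r$ one can pick $r_0$ in the \emph{same cell}, whence $\operatorname{sgn}\langle w_j,r\rangle=\operatorname{sgn}\langle w_j,r_0\rangle$ for every $j$ with $\langle w_j,r\rangle\neq0$; the residual $\widetilde W_{+,r_0}^T\widetilde W_{+,s_0}$ is then supported on $\Omega_{r,s}=\{j:\langle w_j,r\rangle=0\text{ or }\langle w_j,s\rangle=0\}$, which almost surely has cardinality at most $2\ell$ (genericity of Gaussian hyperplanes in an $\ell$-dimensional subspace), and Lemma~\ref{lemma:Wtilde} bounds it uniformly over all such index sets. In short, replace your fixed $\delta$-net and ``$C\delta m$ sign flips'' by the $W$-adaptive net with exact sign agreement off a set of size $\le 2\ell$; the rest of your outline then goes through as written.
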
 

With the above two results, we are in a position to prove Lemma \ref{prop:RandRRWDC}.
\begin{proof}[Proof of Lemma \ref{prop:RandRRWDC}]
We begin establishing the proposition in the $d=2$ case. 

If $n_1 \geq 2 C_\eps k$ by the second part of Proposition \ref{prop:unifSubs} with $U, V, R, S = \R^k$, $W_1$ 
satisfies \eqref{eq:RRWDC} with probability at least
\[
1 - \gamma \Big( \frac{e n_1}{k}\Big)^{2k} \exp(- c_\eps n_1).
\]

	We next consider the bound \eqref{eq:RRWDC} for $j=2$.
	 Fix $W_1$ and observe that, by Lemma \ref{lemma:joshiRange}, range$(G_1)$ 
	 is contained in the union of at most $\Psi_1$ number of $k$-dimensional affine 
	 subspaces of $\R^{n_1}$ and $\{ G_1(x_1) - G_1(x_2):  x_1, x_2  \in \R^k \}$ 
	 is contained in the union of at most $\Psi_1^2$ number of $2k$-dimensional affine subspaces of $\R^{n_1}$. 
	 Since then an $\ell$-dimensional affine subspace is also contained in an $\ell+1$ subspace. 
	 We have that range$(G_1) \subset \mathcal{R}_1$  where $\mathcal{R}_1$ is  the union of at most $\Psi_1$ 
	 number of $k+1$-dimensional subspaces and $\{ G_1(x_1) - G_1(x_2):  x_1, x_2  \in \R^k \} \subset \mathcal{U}_1$ 
	 where $\mathcal{U}_1$ is the union of at most $\Psi_1^2$ number of $2k+1$-dimensional  subspaces.
	 
	 By applying the second part of Proposition \ref{prop:unifSubs} 
	 to the sets $\mathcal{U}_1, \mathcal{U}_1, \mathcal{R}_1$ and $\mathcal{R}_1$, we have that for fixed $W_1$,
	 \begin{multline}\label{eq:RRWDCj2}
     \Big| \langle \Big( (W_{2})_{+, G_{1}(x)}^T (W_{2})_{+,G_{1}(y)} - Q_{G_{1}(x),G_{1}(y)} \Big)\big( G_{1}(x_1) - G_{1}(x_2)\big),  G_{1}(x_3) - G_{1}(x_4) \rangle \Big| \\ \leq \eps \|G_{1}(x_1) - G_{1}(x_2) \|_2 \|G_{1}(x_3) - G_{1}(x_4) \|_2
	\end{multline}
	with probability at least 
	\[
		1 - {\gamma} \Psi_1^6 \Big(\frac{e \, n_2}{k+1}\Big)^{2 k +2}  e^{- {c}_\eps n_2} \geq 1 - {\gamma}  \Big(\frac{e \, n_2}{k+1}\Big)^{4 k} e^{- {c}_\eps n_2/2}
	\]
	provided that $n_2 \geq 12 {c}_\eps^{-1} \log \Psi_1 $ and $n_2 \geq 2 C_\eps (2 k + 1)$. 
	In particular the above holds provided that $n_2 \geq \widetilde{C}_\eps k \log({e n_1}/{k})$ 
	where $\widetilde{C}_\eps$ depends polynomially on $\eps^{-1}$.
	
	Integrating over the probability space of $W_1$, independence of $W_2 $ and $W_1$ implies that \eqref{eq:RRWDCj2} 
	holds for random $W_1$ with the same probability bound. 
	This allows us to conclude that a two-layer random generative network $G$ satisfies the \RRWDC with probability at least 
	\[
		1 - \gamma \Big(\frac{e n_1}{k}\Big)^{2 k}  e^{- {c}_\eps n_1 } - {\gamma}  \Big(\frac{e \, n_2}{k+1}\Big)^{4 k} e^{- {c}_\eps n_2/2}.
	\] 
	 
	The proof of the  $d \geq 2$ case follows similarly. 
	In particular, to establish \eqref{eq:RRWDC} for $W_i$ notice that range$(G_{i-1})$ 
	is contained in the union of at most $\Psi_{i-1}$ number $k+1$ subspaces, 
	and $\{ G_{i-1}(x_1) - G_{i-1}(x_2):  x_1, x_2  \in \R^k \}$ in the union of at most $\Psi_{i-1}^2$ 
	number of $2k+1$-dimensional subspaces. Applying Proposition \ref{prop:unifSubs} 
	to these subspaces we have that for fixed $\{W_j\}_{j \in [i-1]}$
	\begin{multline}\label{eq:RRWDCjj}
     \Big| \langle \Big( (W_{i})_{+, G_{i-1}(x)}^T (W_{i})_{+,G_{i-1}(y)} - Q_{G_{i-1}(x),G_{i-1}(y)} \Big)\big( G_{i-1}(x_1) - G_{i-1}(x_2)\big),  G_{i-1}(x_3) - G_{i-1}(x_4) \rangle \Big| \\ \leq \eps \|G_{i-1}(x_1) - G_{i-1}(x_2) \|_2 \|G_{i-1}(x_3) - G_{i-1}(x_4) \|_2
	\end{multline}
	with probability at least
	\[
 		1 - {\gamma}  \Big(\frac{e \, n_i}{k+1}\Big)^{4 k} e^{- {c}_\eps n_i/2}
	\]
	provided that
	\[
		n_i \geq \widetilde{C}_\eps \cdot k \cdot \prod_{j=1}^{i-1}
         \frac{e \,n_j}{k}.
	\]
	Integrating over the probability space of $\{W_j \}_{j \in [i-1]})$ 
	indpendence of $W_i$ and $(W_1, \dots, W_{i-1})$ gives that \eqref{eq:RRWDCjj} holds  with the same probability bound.
\end{proof}

We will devote the following section to the proof of Proposition \ref{prop:unifSubs}. 

\subsection{Proof of Proposition \ref{prop:unifSubs}}

We begin by proving a weaker form of Proposition \ref{prop:unifSubs}, that characterizes the concentration of $\Wpr^T \Wps $ around its mean for fixed $r, s$ and when acting on $\ell$-dimensional subspaces. 

\begin{lemma}\label{lemma:fixSubs}
Fix $0 < \eps < 1$ and $k < n$. Let $W \in \R^{m \times n}$ have i.i.d. $\mathcal{N}(0, 1/m)$ entries and fix $r, s \in \R^{n}$. Let $T$  be a $\ell$-dimensional subspace of $\R^n$. Then if $m \geq \tilde{C}_1 \ell$, we have that with probability exceeding $1 - 2 \exp(- \tilde{c}_1 \,m)$,
\begin{equation}\label{eq:fixSubs_uu}
    | \langle \Wpr^T \Wps u, u\rangle - \langle Q_{r,s} u, u \rangle  | \leq  \eps \| u \|_2^2  \quad \forall \; u \in T
\end{equation}
and 
\begin{equation}\label{eq:fixSubs_uv}
    | \langle \Wpr^T \Wps u, v\rangle - \langle Q_{r,s} u, v \rangle  | \leq 3 \eps \| u \|_2 \|v\|_2 \quad \forall \; u, v \in T,
\end{equation}
Furthermore, let $U = \bigcup_{i = 1}^{N_1} U_i$ and $V = \bigcup_{j = 1}^{N_2} V_j$ where $U_i$ and $V_j$ are subspaces of $\R^n$ of dimension at most $\ell$ for all $i \in [N_1]$ and $j \in [N_2]$. Then if $m \geq 2 \tilde{C}_1 \ell$
\begin{equation}\label{eq:fixSubs_UV}
    | \langle \Wpr^T \Wps u, v\rangle - \langle Q_{r,s} u, v \rangle  | \leq 3 \eps \| u \|_2 \|v\|_2 \quad \forall \; u \in U, \forall \; v \in V,
\end{equation}
with probability exceeding $1 - 2 N_1 N_2 \exp(- \tilde{c}_1 m)$. Here $\tilde{c}_1$ depends polynomially on $\eps$ and $\tilde{C}_1 = \Omega(\eps^{-1} \log \eps^{-1})$. 
\end{lemma}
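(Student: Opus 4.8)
The plan is to reduce the uniform bounds to a pointwise sub-exponential concentration estimate and then run an $\eps$-net argument over the relevant low-dimensional subspaces. Writing $w_i$ for the $i$-th row of $W$, the $w_i$ are i.i.d. $\mathcal{N}(0,I_n/m)$ and
\[
  \langle \Wpr^T \Wps\, u, v \rangle = \sum_{i=1}^{m} \indx(w_i \cdot r > 0)\,\indx(w_i \cdot s > 0)\,(w_i \cdot u)(w_i \cdot v),
\]
so the left-hand side is a sum of $m$ independent random variables. The classical evaluation of $\EX\big[\indx(g\cdot \hat r > 0)\,\indx(g\cdot \hat s > 0)\, g g^{T}\big]$ for $g \sim \mathcal{N}(0,I_n)$ (the computation underlying the WDC in \cite{hand2019globalIEEE}) shows that this sum has mean exactly $\langle Q_{r,s} u, v\rangle$, with $Q_{r,s}$ as in \eqref{eq:defQ}. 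Each summand is a bounded indicator times a product of two correlated Gaussians, hence sub-exponential with $\psi_1$-norm $O(\|u\|_2\|v\|_2/m)$, and Bernstein's inequality for independent sub-exponential variables gives, for \emph{fixed} $u,v,r,s$,
\[
  \PX\Big( \big|\langle (\Wpr^T \Wps - Q_{r,s})u, v\rangle\big| > t\,\|u\|_2\|v\|_2 \Big) \le 2\exp\!\big(-c\, m \min(t^2, t)\big)
\]
for a universal $c>0$.

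Next I would upgrade this to a bound uniform over $T$. Fix a $\delta$-net $\mathcal{N}$ of the unit sphere of $T$ with $|\mathcal{N}| \le (1+2/\delta)^{\ell}$ and union-bound the pointwise estimate over all pairs in $\mathcal{N}\times\mathcal{N}$: with exceptional probability at most $2(1+2/\delta)^{2\ell}\exp(-c\,m\min(t^2,t))$ one has $|\langle(\Wpr^T\Wps - Q_{r,s})u,v\rangle|\le t$ for all $u,v\in\mathcal{N}$. Taking $t$ a small constant multiple of $\eps$ and $m\ge\tilde{C}_1\ell$ with $\tilde{C}_1$ as in the statement, the entropy $2\ell\log(1+2/\delta)$ is absorbed into half the exponent and the failure probability collapses to $2\exp(-\tilde{c}_1 m)$. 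From the net one passes to all of $T$ via the standard comparison lemmas: since $u\mapsto\langle Au,u\rangle$ depends only on the symmetric part of $A$, the quadratic-form comparison lemma applied to $\mathrm{Sym}\big(P_T(\Wpr^T\Wps-Q_{r,s})P_T\big)$ yields \eqref{eq:fixSubs_uu}; for \eqref{eq:fixSubs_uv} one uses instead the operator-norm comparison (equivalently, combines \eqref{eq:fixSubs_uu} with a net bound on the antisymmetric part of $\Wpr^T\Wps$, whose mean vanishes), since the non-symmetry of $\Wpr^T\Wps$ means the quadratic-form bound alone does not pin down the bilinear form — this is where the factor $3$ in \eqref{eq:fixSubs_uv} enters once $\delta$ is calibrated.

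Finally, for the union-of-subspaces statement I would apply the single-subspace bilinear bound \eqref{eq:fixSubs_uv} to each $U_i + V_j$, which has dimension at most $2\ell$ — hence the hypothesis $m\ge 2\tilde{C}_1\ell$ — and take a union bound over the $N_1 N_2$ pairs $(i,j)$, giving exceptional probability $2 N_1 N_2 \exp(-\tilde{c}_1 m)$ and \eqref{eq:fixSubs_UV}. The step I expect to require the most care is the net argument itself: the entire gain over the full-matrix WDC of \cite{hand2019globalIEEE} is that restricting to an $\ell$-dimensional subspace shrinks the net to size $e^{O(\ell)}$ rather than $e^{O(n)}$, so that $m=\Theta(\ell)$ (and not $\Theta(\ell\log n)$) suffices; getting the entropy bookkeeping, the polynomial-in-$\eps^{-1}$ dependence of $\tilde{C}_1$ and $\tilde{c}_1$, and the handling of the asymmetry of $\Wpr^T\Wps$ exactly right is the non-mechanical part, whereas the pointwise concentration is essentially already contained in the WDC analysis.
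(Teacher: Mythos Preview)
Your approach is essentially the paper's: pointwise sub-exponential concentration for the centered sum $\sum_i \indx_{\langle w_i,r\rangle>0}\indx_{\langle w_i,s\rangle>0}\langle w_i,u\rangle^2 - \EX[\cdot]$, an $\eps$-net over the unit sphere of the $\ell$-dimensional subspace $T$ to get \eqref{eq:fixSubs_uu}, a passage from the quadratic to the bilinear bound \eqref{eq:fixSubs_uv}, and finally a union bound over $\mathrm{span}(U_i,V_j)$ for \eqref{eq:fixSubs_UV}. One small correction: $\Wpr^T\Wps=\sum_i \indx_{\langle w_i,r\rangle>0}\indx_{\langle w_i,s\rangle>0}\,w_i w_i^T$ is already \emph{symmetric} (and so is $Q_{r,s}$, since $M_{\hat r\leftrightarrow\hat s}$ is a reflection), so there is no antisymmetric part to handle separately; once \eqref{eq:fixSubs_uu} holds, the operator norm of $P_T(\Wpr^T\Wps-Q_{r,s})P_T$ is already controlled and \eqref{eq:fixSubs_uv} follows directly. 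The factor $3$ is thus not a consequence of asymmetry but simply slack from the net-to-full comparison (the paper attributes this step to an adaptation of Proposition~5 in \cite{leong2020}); otherwise your plan matches the paper's line for line.
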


\begin{proof}
The proof follows the one in Proposition 4 of \cite{leong2020} with minor variations. Set $\Sigma_{r,s} := \Wpr^T \Wps - Q_{r,s}$,  and notice that for fixed $u \in \R^{n-1}$,
$
    \langle \Sigma_{r,s} u, u \rangle = \sum_{i=1}^m Y_i 
$
where $Y_i = X_i - \EX[X_i]$, $X_{i} = \indx_{\langle w_i, r \rangle > 0}\indx_{\langle w_i, s \rangle > 0} \langle w_i, u \rangle^2$ and each $w_i \sim \mathcal{N}(0, I_n/m)$. We then notice that the $Y_i$ are sub-exponential random variables and by standard $\eps$-net argument we can show that \eqref{eq:fixSubs_uu} holds with high-probability.   
Proposition 5 in \cite{leong2020} can then be adapted to this case as well and used to derive \eqref{eq:fixSubs_uv} from \eqref{eq:fixSubs_uu}. Finally \eqref{eq:fixSubs_UV} follows by a union bound over all subspaces of the form span$(U_i, V_j)$.
\end{proof}

We next observe that the rows of a sufficiently tall random matrix $W$ tessellate the unit sphere in regions of small diameter. 

\begin{lemma}\label{lemma:conseqTess}
Fix $0 < \eps < 1$. Let $W \in \R^{m \times n}$ have i.i.d. $\mathcal{N}( 0, 1/m)$ 
entries with rows $\{w_\ell\}_{\ell=1}^m$. Let $Z$ be a $\ell$-dimensional subspace of $\R^n$. 
Define $E_{Z,W}$ to be the event that there exists a set $Z_0 \subset Z$ with the following properties:
\begin{enumerate}[i)]
    \item each $z_0 \in Z_0$ satisfies $\langle w_\ell, z_0 \rangle \neq 0$ for all $\ell \in [m]$,
    
    \item $|Z_0| \leq (\frac{e \, m}{\ell})^\ell$, and 
    
    \item for all $z \in Z$ such that $\|z \|_2 = 1$, there exists $z_0 \in Z_0$ such that $\| z - z_0 \|_2\leq \eps$.
\end{enumerate}

If $m \geq \tilde{C}_2 \ell$, then $\PX(E_{Z,W}) \geq 1 - C_2 \exp(- c_2 \eps m)$. 
Here $C_2$ and $c_2$ are positive absolute constants and $\tilde{C}_2$ depends polynomially on $\eps^{-1}$. 
\end{lemma}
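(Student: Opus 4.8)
The plan is to take $Z_0$ to consist of one representative per chamber of the central hyperplane arrangement that the rows of $W$ cut out on $Z$, and to deduce property iii) from the fact that this arrangement tessellates the unit sphere of $Z$ into pieces of diameter at most $\eps$.

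First I would reduce to $Z=\R^{\ell}$: since the rows of $W$ are rotationally invariant, after choosing an orthonormal basis of $Z$ the projections $P_Zw_\ell$ become i.i.d.\ $\mathcal{N}(0,I_\ell/m)$, and both the hyperplanes $H_\ell=\{z\in Z:\langle w_\ell,z\rangle=0\}$ and the events in i) depend only on the directions $\hat w_\ell\in S^{\ell-1}$. The $m$ central hyperplanes $H_\ell$ partition $\R^{\ell}$ into finitely many open conic chambers, and by the standard count for central arrangements (valid since $m\ge\tilde{C}_2\ell\ge\ell$) their number is at most $\sum_{i=0}^{\ell}\binom{m}{i}\le(em/\ell)^{\ell}$, exactly the count used in Lemma~\ref{lemma:joshiRange}. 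Choosing one point in the interior of each chamber and rescaling it onto the unit sphere of $Z$ produces, for \emph{every} realization of $W$, a set $Z_0$ with $|Z_0|\le(em/\ell)^{\ell}$ (property ii)) whose points lie strictly off every $H_\ell$ (property i)).

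Hence the only probabilistic statement to prove is iii), which follows from the tessellation estimate: with probability at least $1-C_2e^{-c_2\eps m}$, every chamber $C$ satisfies $\operatorname{diam}(C\cap S^{\ell-1})\le\eps$. Granting this, any unit $z\in Z$ lies in the closure of some chamber $C$, whose representative lies in $C\cap S^{\ell-1}$ and is therefore within $\operatorname{diam}(\overline{C\cap S^{\ell-1}})=\operatorname{diam}(C\cap S^{\ell-1})\le\eps$ of $z$. To prove the tessellation estimate I would follow the scheme of Proposition~3 in \cite{leong2020}: two unit vectors $u,v$ lie in the same chamber iff no $\hat w_\ell$ separates them, and for fixed $u,v$ the quantity $g(u,v):=\tfrac1m\#\{\ell:\operatorname{sgn}\langle w_\ell,u\rangle\ne\operatorname{sgn}\langle w_\ell,v\rangle\}$ is an average of $m$ i.i.d.\ Bernoulli's with mean $\theta_{u,v}/\pi\ge\|u-v\|_2/\pi$. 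A Chernoff bound together with a union bound over pairs from a $\rho$-net $\mathcal{N}$ of $S^{\ell-1}$ (with $\rho\asymp\eps$ small) gives, with the claimed probability, $g(u',v')\ge\alpha$ for all $u',v'\in\mathcal{N}$ with $\|u'-v'\|_2\ge\eps/2$, where $\alpha\asymp\eps$; balancing net size against the exponent forces $m\gtrsim\eps^{-1}\ell\log(1/\eps)$, i.e.\ $\tilde{C}_2=\operatorname{poly}(\eps^{-1})$. A second Chernoff-plus-net argument gives, with comparable probability, $\#\{\ell:|\langle\hat w_\ell,u_0\rangle|\le2\rho\}\le\beta m$ for every $u_0\in\mathcal{N}$, hence (since $|\langle\hat w_\ell,u\rangle|\le\rho$ implies $|\langle\hat w_\ell,u_0\rangle|\le2\rho$ for the nearest net point) $g(u,u')\le\beta$ whenever $\|u-u'\|_2\le\rho$, with $\beta$ made smaller than $\alpha/2$ by shrinking $\rho$. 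Finally, for arbitrary unit $u,v$ in a common chamber, approximating by $u',v'\in\mathcal{N}$ within $\rho$ and using the Hamming triangle inequality $g(u',v')\le g(u,v)+g(u,u')+g(v,v')=g(u,u')+g(v,v')\le2\beta<\alpha$ forces $\|u'-v'\|_2<\eps/2$, whence $\|u-v\|_2<\eps$; so no chamber contains two unit vectors at distance $\ge\eps$.

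The main obstacle is precisely this passage from a net of the sphere to the full continuum of unit vectors: a single net does not suffice because two $\eps/2$-close points need not lie in the same chamber, so one must control \emph{uniformly over all unit vectors} how many hyperplanes pass within distance $\rho$ of each, and then reconcile the coarse separation bound at net points with this fine ``few nearby hyperplanes'' bound through the Hamming triangle inequality. The remaining ingredients — the reduction to $\R^{\ell}$, the arrangement count, and the individual Chernoff estimates — are routine.
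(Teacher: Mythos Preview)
Your proposal is correct and follows the same approach the paper intends: construct $Z_0$ by picking one unit representative per open chamber of the central arrangement cut out by the rows of $W$ on $Z$, obtain the cardinality bound from the standard hyperplane count (what the paper states as Lemma~\ref{lemma:partition}, which underlies Lemma~\ref{lemma:joshiRange}), and deduce property~iii) from a Plan--Vershynin--type tessellation bound showing each chamber has small diameter with high probability. The paper does not spell this out but simply defers to Lemma~24 in \cite{leong2020}; your sketch is precisely that argument, so the only discrepancy is that you point to Proposition~3 rather than Lemma~24 of \cite{leong2020} and to Lemma~\ref{lemma:joshiRange} rather than Lemma~\ref{lemma:partition} for the chamber count.
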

\begin{proof}
	The proof of this lemma follows the one in Lemma 24 in \cite{leong2020}. 
    The upper bound $|Z_0| \leq (\frac{e \, m}{\ell})^\ell$ is due to Lemma \ref{lemma:partition} in Appendix \ref{appx:suppProofRRWDC}.
\end{proof}

We are now ready to present the proof of Proposition \ref{prop:unifSubs}.

\begin{proof}[Proof of Proposition \ref{prop:unifSubs}]

Let $E_{R,W}$ be the event defined in Lemma \ref{lemma:conseqTess} corresponding to the matrix $W$ and subspace $R$. 
On the event $E_{R,W}$ there exists a finite set $R_0 \subset R$ satisfying properties i) - iii) of Lemma \ref{lemma:conseqTess}. 
Similarly, we can define the event $E_{S,W}$ for the matrix $W$ and subspace $S$, and the finite set $S_0 \subset S$ satisfying properties i) - iii).

 We can then define the event $E_{R,S} := E_{R,W} \cap E_{S,W}$ so that if $m \geq \tilde{C}_2 \ell'$ by Lemma  \ref{lemma:conseqTess} we have
\[
    \PX (E_{R,S}) \geq 1 - 2 C_2 \exp(- c_2 \eps m).
\]
 For fixed $r_0 \in R_0$ and $s_0 \in S_0$, Lemma \ref{lemma:fixSubs} gives that if $m \geq 2 \tilde{C}_1 \ell$ 
 with probability at least $1 - 2 \exp(- \tilde{c}_1 m)$
\[
    |\langle W_{+, r_0}^T W_{+, s_0} u, v \rangle - \langle Q_{r_0, v_0} u, v \rangle| \leq 3 \eps \|u\|_2 \|v\|_2 \quad \forall u,v \in T.
\]
Next, let $E_0$ be the event that
\[
|\langle W_{+, r_0}^T W_{+, s_0} u, v \rangle - \langle Q_{r_0, v_0} u, v \rangle| \leq 3 \eps \|u\|_2 \|v\|_2 \quad \forall \, u,v \in T, \, r_0 \in R_0,\,  s_0 \in S_0.
\] 
Then, on $E_{R,S}$, a union bound gives 
\[
    \PX(E_0) \geq 1 - 2 |R_0| |S_0| \exp(- \tilde{c}_1 {m}/{2}) \geq 1 - 2 \Big(\frac{e \, m}{\ell}\Big)^{2\ell} \exp(- \tilde{c}_1 {m}/{2}).
\]
We will next work on the event $E_0 \cap E_{R,S}$. 
Fix nonzero $r \in R$ and $s \in S$, and define the set of indices
\[
    \Omega_{r,s} := \{ j \in [m] : \langle w_j, r\rangle = 0 \; or \; \langle w_j, s \rangle = 0 \}
\]
Observe then that by the definition of $\Wpr$ and $\Omega_{r,s}$ the following holds
\[
\begin{aligned}
    \Wpr^T \Wps &= \sum_{j = 1}^m \indx_{\langle w_j, r \rangle > 0} \indx_{\langle w_j, s \rangle > 0} w_j w_j^T \\
    &= \sum_{j \in \Omega_{r,s}} \indx_{\langle w_j, r \rangle > 0} \indx_{\langle w_j, s \rangle > 0} w_j w_j^T 
    + \sum_{j \in \Omega^c_{r,s}} \indx_{\langle w_j, r \rangle > 0} \indx_{\langle w_j, s \rangle > 0} w_j w_j^T\\
     &= \sum_{j \in \Omega^c_{r,s}} \indx_{\langle w_j, r \rangle > 0} \indx_{\langle w_j, s \rangle > 0} w_j w_j^T\\
\end{aligned}
\]
On the event $E_{R,S}$, there exist therefore $r_0 \in R_0$ and $s_0 \in S_0$ such that for all $j \in \Omega_{r,s}^c$ it holds that
\[
    \operatorname{sgn}(\langle w_j, r \rangle ) = \operatorname{sgn}(\langle w_j, r_0 \rangle ) \quad \text{and} \quad  \operatorname{sgn}(\langle w_j, s \rangle ) = \operatorname{sgn}(\langle w_j, s_0 \rangle ).
\]
In particular, we can write 
\[
\begin{aligned}
   \Wpr^T \Wps  
   &= \sum_{j \in \Omega^c_{r,s}} \indx_{\langle w_j, r \rangle > 0} \indx_{\langle w_j, s \rangle > 0} w_j w_j^T\\
   &= W_{+, r_0}^T W_{+, s_0} - \sum_{j \in \Omega_{r,s}} \indx_{\langle w_j, r_0 \rangle > 0} \indx_{\langle w_j, s_0 \rangle > 0} w_j w_j^T\\
   &=: W_{+, r_0}^T W_{+, s_0} - \widetilde{W}_{+, r_0}^T \widetilde{W}_{+, s_0} 
\end{aligned}
\]
The next lemma shows that the residual $\widetilde{W}_{+, r_0}^T \widetilde{W}_{+, s_0} $ has small norm when acting on $T$. 

\begin{lemma}\label{lemma:Wtilde}
Fix $0<\eps< 1$ and $\ell < m$. Suppose that $W \in \R^{m \times n}$ has i.i.d. $\mathcal{N}(0,1/m)$ entries. 
Let $T \subset \R^{n}$ be an $\ell$-dimensional subspace and $R_0$ and $S_0$ be subsets of $\R^n$. 
Let $E_1$ be the event the following inequality holds for all set of indexes $\Omega \subset [m]$ with cardinality $|\Omega| \leq 2 \ell$:
\[
|\langle \widetilde{W}_{+, r_0}^T \widetilde{W}_{+, s_0} u, v \rangle | 
\leq \eps \|u\|_2\|v\|_2\quad \forall u,v \in T, \, r_0 \in R_0, \, s_0 \in S_0
\]
where 
\[
    \widetilde{W}_{+, r_0}^T \widetilde{W}_{+, s_0} := \sum_{j \in \Omega} \indx_{\langle w_j, r_0 \rangle > 0} \indx_{\langle w_j, s_0 \rangle > 0} w_j w_j^T.
\]  
There exists a $\delta_{\eps} > 0$ such that if $m \geq 9 \eps^{-1} \ell$ an $2 \ell \leq \delta_\eps m$, 
then $\PX (E_1) \geq 1 - 2 m \exp(- \eps m / 36)$.
\end{lemma}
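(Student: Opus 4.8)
\emph{Overview of the plan.} The goal is to bound a bilinear form built from $\widetilde{W}_{+,r_0}$ and $\widetilde{W}_{+,s_0}$, uniformly over $u,v\in T$, over $r_0\in R_0$, $s_0\in S_0$, and over all index sets $\Omega$ of size at most $2\ell$. The strategy is: (i) by Cauchy--Schwarz, reduce this to a bound on the operator norm of \emph{any} $2\ell$ rows of $W$ restricted to $T$; (ii) bound that ``top-$2\ell$'' restricted norm by a truncation (soft-thresholding) trick at level $\lambda\asymp\eps/\ell$; and (iii) make the truncated quantity uniform over the unit sphere of $T$ by an $\eps$-net argument combined with concentration, using that $m\lambda\asymp m/\ell$ is large.

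\emph{Step 1: reduction.} Writing the quadratic form as a sum of at most $2\ell$ rank-one terms with coefficients $\indx_{\langle w_j,r_0\rangle>0}\indx_{\langle w_j,s_0\rangle>0}\in\{0,1\}$ and applying Cauchy--Schwarz shows that, for every $u,v\in T$, $r_0\in R_0$, $s_0\in S_0$ and every $\Omega\subset[m]$ with $|\Omega|\le 2\ell$,
\[
 \Big|\langle \widetilde{W}_{+,r_0}^T\widetilde{W}_{+,s_0}u,v\rangle\Big|\;\le\;\Big(\sum_{j\in\Omega}\langle w_j,u\rangle^2\Big)^{1/2}\Big(\sum_{j\in\Omega}\langle w_j,v\rangle^2\Big)^{1/2},
\]
a bound that no longer involves $r_0$ or $s_0$. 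Hence it suffices to control $S(u):=\sup_{|\Omega|\le 2\ell}\sum_{j\in\Omega}\langle w_j,u\rangle^2$, i.e. the sum of the $2\ell$ largest of the $\langle w_j,u\rangle^2$, uniformly over $u$ in the unit sphere $\mathbb{S}_T$ of $T$: if $\sup_{u\in\mathbb{S}_T}S(u)\le\eps$, then $E_1$ holds. (This also removes the need to union over the $\binom{m}{2\ell}$ choices of $\Omega$.)

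\emph{Step 2: truncation.} For any $\lambda>0$, each retained coordinate satisfies $\langle w_j,u\rangle^2\le\lambda+(\langle w_j,u\rangle^2-\lambda)_+$, so
\[
 S(u)\;\le\;2\ell\lambda+\sum_{j=1}^m(\langle w_j,u\rangle^2-\lambda)_+ .
\]
Choosing $\lambda=\eps/(4\ell)$ makes the first term $\eps/2$; the hypothesis $m\ge 9\eps^{-1}\ell$ is what leaves room for this choice, and $2\ell\le\delta_\eps m$ with $\delta_\eps$ small forces $m\lambda=m\eps/(4\ell)$ to be large, so that each summand $(\langle w_j,u\rangle^2-\lambda)_+$, a truncated $\chi_1^2/m$ variable, has expectation of order $e^{-\Omega(m\lambda)}$. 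It then remains to show $\sup_{u\in\mathbb{S}_T}\sum_{j=1}^m(\langle w_j,u\rangle^2-\lambda)_+\le\eps/2$ with the stated probability.

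\emph{Step 3: uniformity over $\mathbb{S}_T$, and the main obstacle.} For a fixed $u$, $\sum_{j=1}^m(\langle w_j,u\rangle^2-\lambda)_+$ is a sum of i.i.d.\ nonnegative sub-exponential variables with minuscule mean, so Bernstein's inequality gives a bound of the form $\exp(-c\eps m)$ on the event that it exceeds $\eps/2$. Passing from one $u$ to all of $\mathbb{S}_T$ is the delicate part, since $t\mapsto(t^2-\lambda)_+$ is not globally Lipschitz; I would use the decomposition $(t^2-\lambda)_+=g(t)^2+2\sqrt\lambda\,g(t)$ with $g(t)=(|t|-\sqrt\lambda)_+$ a $1$-Lipschitz function, so that $u\mapsto\|g(WP_Tu)\|_2$ is $\|WP_T\|_2$-Lipschitz on $T$ (and likewise control the $\ell_1$ term), together with the auxiliary high-probability event $\|WP_T\|_2\le 2$, which holds since $WP_T$ is effectively an $m\times\ell$ Gaussian with $\ell/m$ small. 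A union bound over an $\eps'$-net of $\mathbb{S}_T$ of cardinality $(C/\eps')^\ell$ then closes the argument, provided the net exponent $\ell\log(C/\eps')$ is dominated by $c\eps m$ — which is precisely why the hypothesis $2\ell\le\delta_\eps m$ with $\delta_\eps$ small (depending on $\eps$) is needed. The polynomial prefactor $2m$ and the explicit constant $36$ come from tracking the auxiliary $m$-fold union bounds (e.g.\ over the rows) and the Bernstein constants. The main obstacle is exactly this uniformization: obtaining a clean Lipschitz-type control of the soft-thresholded sum over $\mathbb{S}_T$ and balancing the net cardinality against the concentration exponent, which is where both hypotheses $m\ge 9\eps^{-1}\ell$ and $2\ell\le\delta_\eps m$ are used.
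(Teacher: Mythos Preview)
Your proposal is correct and follows essentially the same approach as the paper. The paper's proof performs exactly your Step~1 (the Cauchy--Schwarz/diagonal-contraction reduction to $\|W_\Omega u\|\le\sqrt{\eps}\,\|u\|$ for all $u\in T$ and all $|\Omega|\le 2\ell$, which eliminates the dependence on $r_0,s_0$) and then defers the remaining ``top-$2\ell$ restricted norm'' bound to Lemma~26 of \cite{leong2020}; your Steps~2--3 spell out a standard soft-thresholding route to that bound.
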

We now consider the event $E := E_1 \cap E_0 \cap E_{R,S}$ where $E_1$  
is the event defined in the previous lemma. On $E$ for all $r \in R, \,s \in S$ and $u, v \in {T}$, 
\begin{align}
    |\langle {W}_{+, r}^T {W}_{+, s}u, v \rangle - \langle Q_{r,s}u, v \rangle | 
    &= \Bigl| \langle {W}_{+, r_0}^T {W}_{+, s_0} u, v \rangle - \langle \widetilde{W}_{+, r_0}^T \widetilde{W}_{+, s_0} u, v \rangle 
    - \langle Q_{r,s} u, v \rangle  \Bigr|  \notag \\
    &\leq \Bigl| \langle {W}_{+, r_0}^T {W}_{+, s_0} u, v \rangle 
    - \langle Q_{r_0,s_0} u, v \rangle \Bigr| \notag \\
    &\quad +  \Bigl| \langle Q_{r_0,s_0} u, v \rangle
    - \langle Q_{r_0,s_0} u, v \rangle \Bigr| \notag \\
    &\quad + \Bigl| \langle \widetilde{W}_{+, r_0}^T \widetilde{W}_{+, s_0} u, v \rangle \Bigr|   \notag \\
    &\leq 3 \eps \|u\|_2\|v\|_2+ \frac{60}{\pi} \eps \|u\|_2\|v\|_2+  \eps \|u\|_2\|v\|_2\notag \\ 
    &\leq 24 \eps \|u\|_2\|v\|_2\label{eq:Leps_uv},
\end{align}
where the first equality used the event $E_{R,S}$ and the definition of $\widetilde{W}_{+, r_0}^T \widetilde{W}_{+, s_0}$. 
The second inequality used instead the event $E_1 \cap E_0$ and the Lipschitz continuity of $Q_{r,s}$ (Lemma \ref{lemma:QLip}). 

In conclusion, there exist $C_\eps$ and $c_\eps$ such that if $m \geq C_\eps \ell'$ then
\[
\begin{aligned}
    \PX (E_1 \cap E_0 \cap E_{R,S}) 
    &\geq 1 - 2 m \exp(- \eps m/ 36) - 2 \Big(\frac{e \, m}{\ell}\Big)^{2 \ell}\exp(- {\tilde{c}_1 m}/{2}) 
    - 2 C_2 \exp(- c_2 \eps m) \\
    &\geq 1 - \gamma \Big(\frac{e \, m}{\ell}\Big)^{2 \ell} \exp(- {c}_\eps m)
\end{aligned}
\]
Here $C_\eps$ depends polynomially on $\eps^{-1}$ and $c_\eps$ depends polynomially on $\eps$, and $\gamma$ is positive absolute constant.
\smallskip

Notice that \eqref{eq:Leps_uv} gives a bound in terms of $24 \eps \|u\|_2 \|v\|_2$.  
To obtain a bound as in \eqref{eq:Subs_uv} simply rescale $\eps$ by $1/24$ in the discussion above, and modify $c_\eps$ and $C_\eps$ accordingly.

\smallskip
To extend \eqref{eq:Subs_uv} to the union of subspaces, we consider the subspace  $T_{i,j} = \text{span}(U_i, V_j)$ 
with dimension at most $2 \ell'$. Then use  \eqref{eq:Subs_uv} with subspaces $T_{i,j}$, $R_p$ and $S_q$, and take a union bound. 
\end{proof}

\subsection{Supplemental Results for Section \ref{sec:ProofR2WDC}}\label{appx:suppProofRRWDC}

We begin this section by providing an upper bound on the number of activation patterns of a $\relu$ layer. 
This result is used in the proof of Lemma \ref{lemma:conseqTess}.
\begin{lemma}\label{lemma:partition}
Let $S$ be an $\ell$-dimensional subspace of $\R^n$ and $m \geq \ell $. Let $W \in \R^{m \times n}$ have i.i.d $\mathcal{N}(0,1/m)$ entries. Then with probability 1, 
\[
    |\{ \diag(W s > 0) W \; | s \in  S \}|  \leq  \Big( \frac{e m}{\ell} \Big)^\ell
\]
\end{lemma}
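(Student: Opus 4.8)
The plan is to reduce the claim to a standard growth-function (VC-dimension) bound for homogeneous halfspaces. The first observation is that, for a fixed $W$ with rows $w_1,\dots,w_m$, the matrix $\diag(Ws>0)W$ is determined by $s$ only through the activation pattern $\sigma(s):=(\indx_{\langle w_i,s\rangle>0})_{i=1}^{m}\in\{0,1\}^m$: its $i$-th row equals $w_i$ when $\sigma(s)_i=1$ and $0$ otherwise. Hence $\bigl|\{\diag(Ws>0)W:\,s\in S\}\bigr|\le\bigl|\{\sigma(s):\,s\in S\}\bigr|$, and it suffices to bound the number of distinct activation patterns as $s$ ranges over the $\ell$-dimensional subspace $S$.

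Next I would pass to coordinates on $S$. Let $P\in\R^{n\times\ell}$ have orthonormal columns spanning $S$, so every $s\in S$ is $s=Pt$ for a unique $t\in\R^\ell$ and $\langle w_i,s\rangle=\langle P^{\T}w_i,t\rangle$. Writing $v_i:=P^{\T}w_i\in\R^\ell$, the pattern $\sigma(Pt)$ is exactly $(\indx_{\langle v_i,t\rangle>0})_{i=1}^m$, so $\bigl|\{\sigma(s):\,s\in S\}\bigr|$ is the value at the $m$ points $v_1,\dots,v_m$ of the growth function of the class $\mathcal{H}=\{x\mapsto\indx_{\langle w,x\rangle>0}:\,w\in\R^\ell\}$ of homogeneous linear threshold functions on $\R^\ell$. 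This class has VC dimension exactly $\ell$: it shatters the standard basis $e_1,\dots,e_\ell$ (take $w$ with the desired sign pattern), while for any $\ell+1$ points a linear dependence $\sum_i a_iv_i=0$ obstructs the labeling that is $1$ on $\{i:a_i>0\}$ and $0$ elsewhere. By the Sauer--Shelah lemma and the hypothesis $m\ge\ell$, the number of patterns is therefore at most $\sum_{j=0}^{\ell}\binom{m}{j}$.

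Finally I would finish with the elementary inequality $\sum_{j=0}^{\ell}\binom{m}{j}\le(em/\ell)^{\ell}$, valid for $1\le\ell\le m$ (multiply the left side by $(\ell/m)^\ell$, use $(\ell/m)^\ell\le(\ell/m)^j$ for $j\le\ell$, extend the sum to $j=0,\dots,m$, and bound $(1+\ell/m)^m\le e^\ell$). The whole chain is deterministic, valid for every realization of $W$, so the asserted bound holds with probability $1$. I do not expect a genuine obstacle here; the only steps needing a line of care are that the activation pattern really does determine the matrix (so that counting patterns suffices) and that $\mathcal{H}$ consists of \emph{homogeneous} rather than affine halfspaces, which keeps the VC dimension — and hence the exponent in $(em/\ell)^\ell$ — equal to $\ell$. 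If one prefers a purely geometric argument, an equivalent route uses that $v_1,\dots,v_m$ are almost surely in general position and invokes the classical chamber count $2\sum_{j=0}^{\ell-1}\binom{m-1}{j}\le\sum_{j=0}^{\ell}\binom{m}{j}$ for a central arrangement of $m$ hyperplanes in $\R^\ell$, after checking that every activation pattern is attained on some full-dimensional chamber (which follows by perturbing $s$ against the rows $v_i$ to which it is orthogonal, a set of at most $\ell-1$ linearly independent vectors).
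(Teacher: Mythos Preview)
Your argument is correct and follows the same outline as the paper: pass to coordinates on $S$ (the paper does this via rotational invariance of the Gaussian, you via an orthonormal basis $P$---equivalent), observe that the matrix is determined by the activation pattern, bound the number of patterns by $\sum_{j=0}^{\ell}\binom{m}{j}$, and finish with the standard estimate $\sum_{j=0}^{\ell}\binom{m}{j}\le (em/\ell)^{\ell}$. The only substantive difference is in how the bound $\sum_{j=0}^{\ell}\binom{m}{j}$ is justified: the paper identifies each activation pattern with a region of the central hyperplane arrangement $\{\langle w_j,\cdot\rangle=0\}_{j\in[m]}$ in $\R^\ell$ and cites the classical region count (your ``alternative route''), whereas you invoke Sauer--Shelah for the VC class of homogeneous halfspaces. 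Your route has the minor advantage of being fully deterministic---no general-position assumption on the $v_i=P^{\T}w_i$ is needed---so the ``with probability $1$'' is automatic; the paper's phrasing that a pattern ``uniquely identifies a region'' implicitly uses that Gaussian rows are a.s.\ in general position (else boundary patterns can fail to match any open chamber), though the final numerical bound is the same either way.
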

\begin{proof}
Observe that by rotational invariance of the Gaussian distribution we may take, without loss of generality, 
$S$ to be the span of the first $\ell$ standard basis vector, i.e.\ $S = \text{span}(e_1, \dots, e_\ell)$. 
We can then also take $W \in \R^{m \times \ell}$ and $S = \R^{\ell}$.  

Let $\{ w_j \}_{j=1}^m$ be the rows of the matrix $W$.
Notice that for fixed $W$, $|\{ \diag(W s > 0) W \; | s \in  S \}|$ equals the number of binary vectors $(\indx_{\langle w_j, v\rangle > 0})_{j \in [n]}$ for $v \in \mathcal{S}^{\ell - 1}$.  Each $(\indx_{\langle w_j, v\rangle > 0})_{j \in [n]}$ uniquely identifies a region of the partitioning of $\R^l$ induced by the set of hyperplanes $\mathcal{H} := \{ x: \, \langle w_j,  x \rangle = 0 \}$. From the theory of hyperplane arrangements \cite{matousek2013lectures} we know that $m \geq \ell$ hyperplanes in $\R^\ell$ partition the space in at most $\sum_{j = 0}^\ell \binom{m}{j}$. Thus, with probability 1 we have 
	\[
	\begin{aligned}
		|\{ \diag(W s > 0) W \; | s \in  S \}|  
		&\leq \sum_{j=0}^\ell \binom{m}{j}\\
		&\leq \sum_{j=0}^\ell \frac{m^j}{j!} \leq \sum_{j=0}^\ell \frac{\ell^j}{j!} \Big( \frac{m}{\ell} \Big)^j 
		\leq \Big( \frac{m}{\ell} \Big)^\ell \sum_{j=0}^\infty \frac{\ell^j}{j!} 
		= \Big( \frac{e m}{\ell} \Big)^\ell
	\end{aligned}
	\]
\end{proof}

Next we prove Lemma \ref{lemma:Wtilde}, providing an upper bound for the random matrix $\widetilde{W}^T \widetilde{W}$ 
when acting on low-dimensional subspaces. 

\begin{proof}[Proof of Lemma \ref{lemma:Wtilde}]
	Notice that for any $\Omega \subset [m]$, $u, v \in T$, $r_0 \in R_0$ and $s_0 \in S_0$, it holds that
	\[
	\begin{aligned}
		|\langle \widetilde{W}_{+, r_0}^T\widetilde{W}_{+, s_0} u, v  \rangle| &= 
		| \langle \text{diag}(W_\Omega r_0 > 0) \odot \text{diag}(W_\Omega s_0 > 0) W_\Omega u, W_\Omega v \rangle | \\
		&\leq \|\text{diag}(W_\Omega r_0 > 0) \odot \text{diag}(W_\Omega s_0 > 0)\| \| W_\Omega v\| \| W_\Omega u\|\\
		&\leq \| W_\Omega v\| \| W_\Omega u\|.
	\end{aligned}
	\] 
	Therefore, it is sufficient to show that 
	\[
		\| W_\Omega u \| \leq \sqrt{\eps} \|u\| \qquad \forall u \in T \: \; \forall \Omega \subset [m] \;\; \text{satisfying} \;\; |\Omega| \leq 2 \ell \leq \delta_\eps m.
	\]
	The rest of the proof follows, \textit{mutatis mutandis}, as in Lemma 26 of \cite{leong2020}. 
\end{proof} 

We will next show that $Q_{x,y}$ is a Lipschitz function of its arguments. 

\begin{lemma}\label{lemma:QLip}
    Fix $0 < \eps < 1$ and $x, \tilde{x}, y, \tilde{y} \in \mathcal{S}^{n-1}$. If $\| \tilde{x} - x\| \leq \eps$ 
    and $\| \tilde{y}  - y \|\leq \eps$, then
    \[
        \| Q_{\tilde{x},\tilde{y}} - Q_{\tilde{x}, \tilde{y}} \| \leq \Big(\frac{2}{\pi} + 2 \sqrt{79}\Big) \eps
    \]
\end{lemma}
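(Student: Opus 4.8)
The plan is to make the dependence of $Q_{r,s}$ on its arguments fully explicit and then estimate the pieces separately. Since $x,y\in\mathcal S^{n-1}$, the vectors $u:=x+y$ and $v:=x-y$ are \emph{orthogonal}, with $\|u\|=2\cos(\theta_{x,y}/2)$, $\|v\|=2\sin(\theta_{x,y}/2)$, hence $\sin\theta_{x,y}=\tfrac12\|u\|\,\|v\|$. When $x\neq\pm y$ the pair $\{u,v\}$ spans $\mathrm{span}(x,y)$, the bisector of $\{\hat x,\hat y\}$ is $\hat u$, and the involution swapping $\hat x\leftrightarrow\hat y$ inside that plane is reflection across $\hat u$, which (using $\hat x,\hat y\in\mathrm{span}(u,v)$ and $u\perp v$) equals $\hat u\hat u^{T}-\hat v\hat v^{T}$ on all of $\R^n$; this also annihilates $\mathrm{span}(x,y)^{\perp}$, so $M_{\hat x\leftrightarrow\hat y}=\hat u\hat u^{T}-\hat v\hat v^{T}$. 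Feeding this and $\sin\theta_{x,y}=\tfrac12\|u\|\,\|v\|$ into \eqref{eq:defQ} gives
\[
Q_{x,y}\;=\;\frac{\pi-\theta_{x,y}}{2\pi}\,I_n\;+\;\frac{1}{2\pi}\bigl(f(x,y)-g(x,y)\bigr),
\]
where, writing $h(w):=ww^{T}/\|w\|$ for $w\neq 0$ and $h(0):=0$, we set $f(x,y):=\tfrac{\|x-y\|}{2}\,h(x+y)$ and $g(x,y):=\tfrac{\|x+y\|}{2}\,h(x-y)$. Crucially, $f$ and $g$ are globally defined and continuous on $\mathcal S^{n-1}\times\mathcal S^{n-1}$, the prefactors absorbing the degeneracies at $x=\pm y$, so nothing downstream needs a case distinction.

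First I would handle the scalar part. Since $\angle(\cdot,\cdot)$ is the geodesic distance on the sphere, the triangle inequality gives $|\theta_{x,y}-\theta_{\widetilde x,\widetilde y}|\le\angle(x,\widetilde x)+\angle(y,\widetilde y)$, and for unit vectors $\angle(a,b)=2\arcsin(\|a-b\|/2)\le\tfrac{\pi}{2}\|a-b\|$ by concavity of $\sin$ on $[0,\pi/2]$. Hence $|\theta_{x,y}-\theta_{\widetilde x,\widetilde y}|\le\tfrac{\pi}{2}(\|x-\widetilde x\|+\|y-\widetilde y\|)\le\pi\eps$, so the $I_n$-term contributes at most $\tfrac{1}{2\pi}\cdot\pi\eps=\tfrac{\eps}{2}$ to $\|Q_{\widetilde x,\widetilde y}-Q_{x,y}\|$.

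Next the matrix part. I would record that $\|h(w)\|=\|w\|$ and that $h$ is $2$-Lipschitz: for $0<\|w\|\le\|w'\|$ write $h(w)-h(w')=\|w\|(\hat w\hat w^{T}-\hat w'\hat w'^{T})+(\|w\|-\|w'\|)\hat w'\hat w'^{T}$ and use $\|\hat w\hat w^{T}-\hat w'\hat w'^{T}\|=\sin\angle(w,w')\le\|w-w'\|/\|w'\|$, so $\|w\|\,\|\hat w\hat w^{T}-\hat w'\hat w'^{T}\|\le\|w-w'\|$, together with $\bigl|\|w\|-\|w'\|\bigr|\le\|w-w'\|$, to obtain $\|h(w)-h(w')\|\le 2\|w-w'\|$ (the case $w'=0$ is immediate). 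Then an add-and-subtract estimate,
\[
\|f(\widetilde x,\widetilde y)-f(x,y)\|\le\Bigl|\tfrac{\|\widetilde x-\widetilde y\|}{2}-\tfrac{\|x-y\|}{2}\Bigr|\,\|h(\widetilde x+\widetilde y)\|+\tfrac{\|x-y\|}{2}\,\|h(\widetilde x+\widetilde y)-h(x+y)\|,
\]
combined with $\bigl|\|\widetilde x-\widetilde y\|-\|x-y\|\bigr|\le 2\eps$, $\|h(\widetilde x+\widetilde y)\|=\|\widetilde x+\widetilde y\|\le 2$, $\tfrac{\|x-y\|}{2}\le 1$, and $\|h(\widetilde x+\widetilde y)-h(x+y)\|\le 2\|(\widetilde x+\widetilde y)-(x+y)\|\le 4\eps$, yields $\|f(\widetilde x,\widetilde y)-f(x,y)\|\le 6\eps$, and symmetrically $\|g(\widetilde x,\widetilde y)-g(x,y)\|\le 6\eps$. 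Adding up, $\|Q_{\widetilde x,\widetilde y}-Q_{x,y}\|\le\tfrac{\eps}{2}+\tfrac{1}{2\pi}(6\eps+6\eps)=(\tfrac12+\tfrac{6}{\pi})\eps$, which is well within the claimed $(\tfrac{2}{\pi}+2\sqrt{79})\eps$; the stated constant just reflects coarser bookkeeping in the intermediate steps.

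The one genuine obstacle is the behaviour near $\theta_{x,y}\in\{0,\pi\}$: the naive route of bounding $\|M_{\hat x\leftrightarrow\hat y}-M_{\hat{\widetilde x}\leftrightarrow\hat{\widetilde y}}\|$ directly and multiplying by $\sin\theta$ produces a factor $1/\|x\pm y\|$ that blows up as $x\to\mp y$, and case-splitting on $\|x\pm y\|$ would only deliver a $\sqrt\eps$-bound. Rewriting $\sin\theta_{x,y}\,M_{\hat x\leftrightarrow\hat y}=f-g$ exactly cancels this singularity, since the vanishing of $\sin\theta_{x,y}$ is always paired with the vanishing of the offending denominator, leaving two genuinely Lipschitz maps. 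The remaining care items are to verify $M_{\hat x\leftrightarrow\hat y}=\hat u\hat u^{T}-\hat v\hat v^{T}$ against the defining properties of $M_{\hat r\leftrightarrow\hat s}$ in \eqref{eq:defQ}, and to fix the evident typo so that the left-hand side reads $\|Q_{\widetilde x,\widetilde y}-Q_{x,y}\|$.
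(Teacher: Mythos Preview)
Your proof is correct and in fact sharper than the one in the paper, but it follows a genuinely different route. The paper splits $Q_{x,y}-Q_{\tilde x,\tilde y}$ into the identity piece and the $\tfrac{\sin\theta}{2\pi}M$ piece, bounds $|\theta_{x,y}-\theta_{\tilde x,\tilde y}|\le 4\eps$ via the same angle triangle inequality you use, and then for the matrix piece simply invokes an external result (Lemma~B.3 of \cite{daskalakis20}) to get $\bigl\|\tfrac{\sin\theta_{x,y}}{2\pi}M_{\hat x\leftrightarrow\hat y}-\tfrac{\sin\theta_{\tilde x,\tilde y}}{2\pi}M_{\hat{\tilde x}\leftrightarrow\hat{\tilde y}}\bigr\|\le 2\sqrt{79}\,\eps$. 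Your approach instead makes the product $\sin\theta_{x,y}\,M_{\hat x\leftrightarrow\hat y}$ explicit as $f(x,y)-g(x,y)$ with $f,g$ globally Lipschitz on $\mathcal S^{n-1}\times\mathcal S^{n-1}$; this is self-contained, directly dissolves the singularity at $\theta\in\{0,\pi\}$ that you correctly flag as the main obstacle, and yields the much smaller constant $\tfrac12+\tfrac{6}{\pi}$ in place of $\tfrac{2}{\pi}+2\sqrt{79}$. The paper's version is shorter on the page because it outsources the hard step; yours is more informative and avoids the citation entirely. Both feed into the proof of Proposition~\ref{prop:unifSubs} in exactly the same way, so your improved constant would only tighten the absolute constants downstream without changing any argument.
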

\begin{proof}
    Recall the following facts:
    \begin{align}
        \|x - y \| &\geq 2 \sin( \angle (x,y)/2 ),  &\forall x,y \in \mathcal{S}^{n-1} \label{eq:xySinxy}\\
        |\angle(x_1, x_2)| &\geq |\angle(x_1,y) - \angle(x_2,y)|,  &\forall x_1, x_2, y \in \mathcal{S}^{n-1} \label{eq:triangleAngle}\\
        \sin(\theta/2) &\geq \theta/4,  &\forall \theta \in [0, \pi] \label{eq:sintheta}
    \end{align}
    Let $\theta_{\tilde{x}, x} = \angle(\tilde{x}, x)$ and $\theta_{\tilde{y}, y} = \angle(\tilde{y}, y)$, then
    \[
        \| Q_{x,y} - Q_{\tilde{x}, \tilde{y}}\| \leq \frac{|\theta_{ x, y} - \theta_{\tilde{x}, \tilde{y}}|}{2 \pi} + \Big\| \frac{\sin \theta_{x,y}}{2\pi} M_{x \leftrightarrow y} - \frac{\sin \theta_{\tilde{x},\tilde{y}}}{2\pi} M_{\tilde{x} \leftrightarrow \tilde{y}} \Big\|.
    \]
    By \eqref{eq:triangleAngle} it holds that
    \[
    	|\theta_{ x, y} - \theta_{\tilde{x}, \tilde{y}}| \leq |\theta_{ x, y} - \theta_{ \tilde{x}, y}| + |\theta_{\tilde{x}, {y}} - \theta_{\tilde{x}, \tilde{y}}| \leq |\theta_{\tilde{x}, x}| + |\theta_{\tilde{y}, y}|,
    \]
    while from \eqref{eq:xySinxy} and \eqref{eq:sintheta} it follows that
    \begin{align*}
        |\theta_{\tilde{x}, x}| &\leq 4 \sin(\theta_{\tilde{x}, x}/2) \leq 2 \eps,\\
        |\theta_{\tilde{y}, y}| &\leq 4 \sin(\theta_{\tilde{y}, y}/2) \leq 2 \eps.
    \end{align*}
    Thus $|\theta_{ x, y} - \theta_{\tilde{x}, \tilde{y}}| \leq 4 \eps$.
    Lemma B.3 in \citep{daskalakis20} then proves that
    \[
        \Big\| \frac{\sin \theta_{x,y}}{2\pi} M_{x \leftrightarrow y} - \frac{\sin \theta_{\tilde{x},\tilde{y}}}{2\pi} M_{\tilde{x} \leftrightarrow \tilde{y}} \Big\| \leq 2 \sqrt{79} \eps,
    \]
    which concludes the proof.
\end{proof}

\section{Proof of Lemma \ref{lemma:noise1}}\label{appx:noise}
In this section we prove Lemma \ref{lemma:noise1} which is used to bound the perturbation of the objective function $\fCS$ 
and its gradient due to the presence of the noise term $\eta$. 

\begin{proof}[Proof of Lemma \ref{lemma:noise1}]
Fix $x, z \in \mathcal{S}^{k-1}$ and notice that by the properties of the Gaussian distribution, for $t \geq 0$ it holds that
   \[
    \PX_{A}\big[ \langle z, \Lambda_x^T A^T \eta \rangle \geq \frac{\|\Lambda_x z\|}{\sqrt{m}} \|\eta\| t \big] = \PX_{y \sim \mathcal{N}(0,1)}\Big[ \frac{\|\Lambda_x z\|}{\sqrt{m}} \|\eta\| y \geq  \frac{\|\Lambda_x z\|}{\sqrt{m}} \|\eta\| t \Big] \leq e^{- \frac{t^2}{2}}.
\]
If $z = x$ use \eqref{eq:boundGx}, while if $z \neq x$ and $G$ differentiable at $x$ use  \eqref{eq:Mdnorm}, to obtain that 
\[
    \PX_{A}\Big[ \langle z, \Lambda_x^T A^T \eta \rangle \geq \sqrt{\frac{13}{12}} \frac{\|\eta\|}{2^{d/2}} \frac{t}{\sqrt{m}} \Big] \leq e^{- \frac{t^2}{2}}
\]
Let $\mathcal{N}_{1/2}$ be a $\frac{1}{2}$-net over $\mathcal{S}^{k-1}$ such that $|\mathcal{N}_{1/2}|\leq 5^{k}$ 
(see for example \citep{vershynin2018high}). 
Recall that by Lemma \ref{lemma:joshiRange} the number of different matrices $\Lambda_x$ is bounded by $\Psi_d$.  Thus, a union bound gives 
\[
\begin{aligned}	
\PX \Big[ \langle z, \Lambda_x^T A^T \eta \rangle \geq \sqrt{\frac{13}{12}} \frac{\|\eta\|}{2^{d/2}} \frac{t}{\sqrt{m}}, \; \; \forall x, z \in \mathcal{S}^{k-1}  \Big] 
    &\leq 
    |\mathcal{N}_{\frac{1}{2}}| \, \Psi_d \, \PX \Big[ \langle z, \Lambda_x^T A^T \eta \rangle \geq \sqrt{\frac{13}{12}} \frac{\|\eta\|}{2^{d/2}} \frac{t}{\sqrt{m}} \Big]  \\
    &\leq \exp(- \frac{t}{2}^2 + \log 5 + \log \Psi_d)
\end{aligned}
\]
Choosing $t = 2 \sqrt{k \log(5 \prod_{i=1}^{d} \frac{e \,n_i}{k})}$ we obtain the theses.


\end{proof}

\section{Extensions}\label{appx:extensions}
\subsection{Compressive Phase Retrieval with a Generative Prior}\label{appx:PR}

Consider a generative network $G: \R^{k} \to \R^{n}$ as in \eqref{eq:Gx}. 
The compressive phase retrieval problem with a generative network prior can be formulated as follows. 
\begin{tcolorbox}
\begin{center}
\textbf{COMPRESSIVE PHASE RETRIEVAL WITH A DEEP GENERATIVE PRIOR}
\end{center}
 \begin{tabular}{rl}
 \textbf{Let}: & $G: \R^k \to \R^n$ generative network, $A \in \R^{m\times n}$ measurement matrix. \\
 \textbf{Let}: & $\ystar = G(\xstar)$ for some unknown $\xstar \in \R^k$.\\
 {} & {} \\
 \textbf{Given}: & $G$ and $A$. \\
  \textbf{Given}: & {Measurements} $b = |A \ystar| + \eta \in \R^{m}$ {with} $m \ll n$ and  $\eta \in \R^{m}$ {noise}.\\
   {} & {} \\
  \textbf{Estimate}: & $\ystar$.
  \end{tabular}
\end{tcolorbox}

To estimate $\ystar$, \cite{oscar2018phase} proposes to find the latent code $\hat{x}$  that minimizes the reconstruction error

\begin{align}
    \tilde{x} &= \arg \min_{x \in \R^{x}} \fPR(x) := \frac{1}{2} \| b - |A G(x)| \|_2^2,  \label{eq:oscar} \\
    \ystar &\approx G(\tilde{x}). \nonumber
\end{align}
In \cite{leong2020} it is shown that Algorithm \ref{algo:subGrad} with inputs $\fPR$, 
small enough step size and arbitrary initial condition estimates $\ystar$ up to the noise level in polynomial time, 
provided that the number of phaseless measurements is up-to $\log$-factors $m \geq k \cdot \text{poly}(d)$ 
and the generative network is logarithmically expansive. The proof uses the WDC and an isometry condition akin to the RRIC. 
As before, the RWDC can be replaced by the RRWDC and obtain the same convergence guarantees. 
Moreover, as in the case of compressed sensing, the logarithmic factor in the number of measurements can be improved using Lemma \ref{lemma:joshiRange}.

\subsection{Denoising with a Generative Prior}\label{appx:Den}

Consider a generative network $G: \R^{k} \to \R^{n}$ as in \eqref{eq:Gx}. 
The denoising problem with a generative network prior can be formulated as follows. 
\begin{tcolorbox}
\begin{center}
\textbf{DENOISING WITH A DEEP GENERATIVE PRIOR}
\end{center}
 \begin{tabular}{rl}
 \textbf{Let}: & $G: \R^k \to \R^n$ generative network. \\
 \textbf{Let}: & $\ystar = G(\xstar)$ for some unknown $\xstar \in \R^k$.\\
 {} & {} \\
 \textbf{Given}: & $G$. \\
  \textbf{Given}: & {Noisy signal} $b =  \ystar + \eta \in \R^{m}$ {with} $\eta \sim \mathcal{N}(0, \sigma^2 \Id_n)$ {noise}.\\
   {} & {} \\
  \textbf{Estimate}: & $\ystar$.
  \end{tabular}
\end{tcolorbox}
To estimate $\ystar$, \cite{heckel2021rate} proposes to find the latent code $\hat{x}$  that minimizes the reconstruction error
\begin{align}
    \tilde{x} &= \arg \min_{x \in \R^{x}} \fDen(x) := \frac{1}{2} \| b -  G(x) \|_2^2,  \label{eq:heckel} \\
    \ystar &\approx G(\tilde{x}). \nonumber
\end{align}
In \cite{heckel2021rate} recovery guarantees based on this minimization problem  
are given for an expansive generative network $G$.  Specifically, it is shown that Algorithm \ref{algo:subGrad} with input $\fDen$, 
small enough step size $\alpha$ and arbitrary initial point $x_0$, reconstructs the signal $\ystar$ up to an $O(k/n)$ error. 
The random network $G$ is assumed to be logarithmically expansive in order to satisfy the WDC with high-probability, 
but inspecting the proof it can be seen that the \RRWDC is enough. Using Lemma \ref{prop:RandRRWDC} 
we can extend the result of \cite{heckel2021rate} to the case of a generative network 
satisfying Assumptions \ref{AssumptionB}. 
\subsection{Spiked Matrix Recovery with a Generative Prior}\label{appx:Spike}

Consider a generative network $G: \R^{k} \to \R^{n}$ as in \eqref{eq:Gx}. 
The spiked Wishart matrix recovery with a generative prior is formulated as follows. 
\begin{tcolorbox}
\begin{center}
\textbf{SPIKED WISHART MATRIX RECOVERY \\ WITH A DEEP GENERATIVE PRIOR}
\end{center}
 \begin{tabular}{rl}
 \textbf{Let}: & $G: \R^k \to \R^n$ generative network. \\
 \textbf{Let}: & $\ystar = G(\xstar)$ for some unknown $\xstar \in \R^k$.\\
 {} & {} \\
 \textbf{Given}: & $G$. \\
  \textbf{Given}: & {Noisy matrix} $B = u \, \ystar^T + \sigma \mathcal{Z} \in \R^{N \times n}$, {with} $u \sim \mathcal{N}(0, I_N)$\\
  {} & and $\mathcal{Z}$ with i.i.d.\ $\mathcal{N}(0,1)$ entries.\\
   {} & {} \\
  \textbf{Estimate}: & $\ystar$.
  \end{tabular}
\end{tcolorbox}

Similarly, the spiked Wigner matrix recovery with a generative prior is formulated as follows. 
\begin{tcolorbox}
\begin{center}
\textbf{SPIKED WIGNER MATRIX RECOVERY \\ WITH A DEEP GENERATIVE PRIOR}
\end{center}
 \begin{tabular}{rl}
 \textbf{Let}: & $G: \R^k \to \R^n$ generative network. \\
 \textbf{Let}: & $\ystar = G(\xstar)$ for some unknown $\xstar \in \R^k$.\\
 {} & {} \\
 \textbf{Given}: & $G$. \\
  \textbf{Given}: & {Noisy matrix} $B = \ystar \, \ystar^T + \sigma \mathcal{H} \in \R^{n \times n}$, \\
  {} & {with} $\mathcal{H}$ from a Gaussian Orthogonal Ensemble.\\
   {} & {} \\
  \textbf{Estimate}: & $\ystar$.
  \end{tabular}
\end{tcolorbox}

To estimate $\ystar$, \cite{cocola2020nonasymptotic} proposes to find the latent code $\hat{x}$  that minimizes the reconstruction error
\begin{align*}
    \tilde{x} &= \arg \min_{x \in \R^{x}} \fSpike(x) := \frac{1}{2} \| M -  G(x)G(x)^T \|_F^2, \\
    \ystar &\approx G(\tilde{x}), \nonumber
\end{align*}
where
\begin{itemize}
	\item in the spiked Wishart model $M = B^T \,B/N - \sigma^2 I_n$;
	\item in the spiked Wigner model $M = B$. 
\end{itemize}

As shown in \cite{cocola2021no}  Algorithm \ref{algo:subGrad} with inputs $\fSpike$, appropriate $\alpha$ and arbitrary initial point $x_0$, 
estimates in polynomial time the signal $\ystar$ with rate-optimal dependence on the noise level or sample complexity. 
In particular, this shows that the absence of a computational-statistical gap in spiked matrix recovery with an expansive (random) 
generative network prior. The proof uses the fact that for $G$ satisfying the WDC the bounds in  
Proposition \ref{prop:concGandTheta}, \ref{prop:grad_conctr}, \ref{prop:Lip} and \ref{prop:convx} hold. 
Since these bounds hold under the weaker RRWDC we can directly extend the results in \cite{cocola2021no} 
to non-expansive generative networks $G$ satisfying Assumptions \ref{AssumptionB}. 

\section{An example of a contractive generative network}\label{appx:example}

In this section we give an example of a generative network as in \eqref{eq:Gx} 
satisfying the conditions \eqref{eq:expansivity} and \eqref{eq:addAssmp}, and with contractive layers. 
\smallskip

Let $d \geq 2$ and $\bar{C}_\eps := \max(\widetilde{C}_\eps, {16 c_\eps^{-1}}/{\log(2)})$. 
Then consider a $d$-layer generative network $G$ such that for $i \in [d]$
\[
	n_i := \bar{C}_\eps \cdot k \cdot d (2 d - i) \cdot \alpha,  
\]
where $\alpha \cdot \bar{C}_\eps \in \mathbb{N}$ and 
\begin{equation}\label{eq:maxalpha}
	\alpha \geq \max\bigg\{ \frac{2 \log\big(\bar{C}_\eps \cdot k \big)}{d^2}, \log\big( e^2 \bar{C}_\eps \big) \bigg\}.
\end{equation}
We now demonstrate that $n_i$ satisfies \eqref{eq:expansivity}. Notice that 
\[
\begin{aligned}
	\log\Big( \prod_{j=1}^{i-1}  \frac{e n_j}{k}\Big) &= \sum_{j=1} ^{i-1} \log\big( \alpha \cdot \bar{C}_\eps \cdot d (2 d - j) \cdot e \big) \\
	&\leq (d-1)  \log\big( \alpha \cdot \bar{C}_\eps \cdot  2d^2 \cdot e \big) \\
	&= (d-1) \big[ \log(e\bar{C}_\eps) + 2 \log(d) \big] + (d-1) \log( 2\alpha)\\
	&\leq (d-1)d  \big[ \log(e\bar{C}_\eps) + 1 \big] + (d-1) \alpha \\
	&\leq (d-1)d  \alpha + (d-1) \alpha\\
	&= (d^2 -1)\alpha
\end{aligned}
\]
where in the second inequality we have used $2 \log(x) \leq x$ and $\log(2 x) \leq x$ for $x > 0$ 
and in the third \eqref{eq:maxalpha}. Next since $d (2 d - i) \geq (d^2 -1)$ 
for every $i \in [d]$, $n_i$ satisfies \eqref{eq:expansivity} for every $i \in [d]$. 

We now show that $n_i$ satisfies \eqref{eq:addAssmp}. We have
\[
\begin{aligned}
	\log(n_i) &= \log(\bar{C}_\eps \cdot k \cdot d (2 d - i) \cdot \alpha)\\
	&= \log( d (2 d - i)  \alpha) + \log(\bar{C}_\eps  k) \\
	&\leq \frac{ d (2 d - i)  \alpha}{2} + \log(\bar{C}_\eps  k) \\
	&\leq \frac{ d (2 d - i)  \alpha}{2} + \frac{ d^2 \alpha}{2} \\
	&\leq d (2 d - i) \alpha,
\end{aligned}
\]
where in the first inequality we have used $2 \log(x) \leq x$ for $x>0$, 
in the second inequality \eqref{eq:maxalpha} and in the third $2 d^2 \leq d (2 d - i) $ for every $i \in [d]$. We therefore have 
\[
	\log(n_i) \cdot  \frac{16 \cdot k \cdot c_\eps^{-1} }{\log(2)}
 \leq d (2 d - i) \cdot \alpha \cdot \frac{16 \cdot k \cdot c_\eps^{-1} }{\log(2)} \leq d (2 d - i) \cdot \alpha \cdot  \bar{C}_\eps \cdot k = n_i.
\]

\end{document}